\def\BibTeX{{\rm B\kern-.05em{\sc i\kern-.025em b}\kern-.08em
    T\kern-.1667em\lower.7ex\hbox{E}\kern-.125emX}}
\def\colorful{0}
\newcommand{\blue}[1]{{{\color{blue}#1}}}
\newcommand{\red}[1]{{\color{red} {#1}}}
\newcommand{\blue}[1]{{{#1}}}
\newcommand{\red}[1]{{{#1}}}
\newcommand{\norm}[1]{\left\lVert#1\right\rVert}
\newcommand{\abs}[1]{\left\lvert#1\right\rvert}
\newcommand{\fig}[1]{\left\{#1\right\}}
\newcommand{\viol}{\mathrm{viol}}
\begin{document}

\title{Properly learning monotone functions via local correction
\thanks{Jane Lange is supported in part by NSF Award CCF-2006664, Big George Fellowship, Akamai Presidential Fellowship and Google. Ronitt Rubinfeld is supported in part by NSF awards CCF-2006664, DMS-2022448 and Fintech@CSAIL. Arsen Vasilyan is supported in part by NSF awards  CCF-2006664, CCF-1565235, CCF-1955217, Big George Fellowship and Fintech@CSAIL.}
}

\author{\IEEEauthorblockN{Jane Lange}
\IEEEauthorblockA{\textit{CSAIL} \\
\textit{MIT}\\
Cambridge, USA \\
jlange@mit.edu}
\and
\IEEEauthorblockN{Ronitt Rubinfeld}
\IEEEauthorblockA{\textit{CSAIL} \\
\textit{MIT}\\
Cambridge, USA \\
ronitt@csail.mit.edu}
\and
\IEEEauthorblockN{Arsen Vasilyan}
\IEEEauthorblockA{\textit{CSAIL} \\
\textit{MIT}\\
Cambridge, USA \\
vasilyan@mit.edu}}

\maketitle

\begin{abstract}
    We give a $2^{\tilde{O}(\sqrt{n}/\eps)}$-time algorithm for properly learning monotone Boolean functions under the uniform distribution over $\{0,1\}^n$. Our algorithm is robust to adversarial label noise and has a running time nearly matching that of the state-of-the-art improper learning algorithm of Bshouty and Tamon (JACM '96) and an information-theoretic lower bound of Blais et al (RANDOM '15). Prior to this work, no proper learning algorithm with running time smaller than $2^{\Omega(n)}$ was known to exist.
    
    The core of our proper learner is a \emph{local computation algorithm} for sorting binary labels on a poset. Our algorithm is built on a body of work on distributed greedy graph algorithms; specifically we rely on a recent work of Ghaffari (FOCS'22), which gives an efficient algorithm for computing maximal matchings in a graph in the LCA model of Rubinfeld et al and Alon et al (ICS'11, SODA'12). The applications of our local sorting algorithm extend beyond learning on the Boolean cube: we also give a tolerant tester for Boolean functions over general posets that distinguishes functions that are $\eps/3$-close to monotone from those that are $\eps$-far. 
    Previous tolerant testers for the Boolean cube only distinguished between 
    $\eps/\Omega(\sqrt{n})$-close and $\eps$-far.
    
\end{abstract}

\begin{IEEEkeywords}
proper learning, property testing, local reconstruction, monotone functions, local computation algorithms
\end{IEEEkeywords}

\section{Introduction}

\textbf{Proper learning of monotone functions.} Consider the \emph{proper learning} problem for monotone functions:
\begin{center}
\textit{
Given i.i.d uniform labeled examples from an unknown monotone $g:\{0,1\}^n\rightarrow \{0,1\}$, output a monotone, $\eps$-accurate predictor $\hat{g}:\{0,1\}^n\rightarrow \{0,1\}$ --- that is, a circuit computing a monotone function that agrees with $g$ on at least $1-\epsilon$ fraction of the domain. 
}
\end{center}
For over 25 years there has been a large statistical-to-computational gap in our understanding of this problem. 
A $2^{\tilde{O}(\sqrt{n}/\eps)}$-time \emph{improper} learning algorithm --- that is, an algorithm that outputs a predictor $f$ that is accurate but not guaranteed to be monotone --- is given in \cite{bshouty1996fourier}.
One could use the output of this algorithm to 
obtain a monotone $\hat{g}$ by computing $f$ on every element of $\{0,1\}^n$ and solving a linear program to obtain the closest monotone function to $f$. Although this gives a $2^{\Tilde{O}\left(\sqrt{n}/\epsilon \right)}$-sample algorithm, the run-time is $2^{\Omega(n)}$. No proper learning algorithm with faster run-time (i.e. $2^{o(n)}$) was known, even given query access to $g$.

Through a new connection with local computation algorithms, we close this gap by giving a $2^{\Tilde{O}\left(\sqrt{n}/\epsilon \right)}$-time algorithm for this problem. Note that the running time essentially matches that of the aforementioned improper learning algorithm of \cite{bshouty1996fourier}. Moreover, our algorithm is essentially optimal, due to the $2^{\Tilde{\Omega}(\sqrt{n})}$ query lower bound of \cite{blais2015learning}. 
Furthermore, our algorithm is robust to adversarial noise in the labels. Specifically, in the agnostic learning model of Kearns, Schapire, and Sellie \cite{kearns1994toward}, our algorithm can handle a noise rate of $\Omega(\epsilon)$.     

\textbf{Monotonicity testing.}
The question of testing monotonicity of an unknown Boolean function over $\{0,1\}^n$ (given query access) has received a large amount of attention  
\cite{goldreich_testing_1998, DodisGLRRS99,chakrabarty_on_2013,chen_new_2014,chen_boolean_2015,khot_monotonicity_2015,belovs_polynomial_2016,chen_beyond_2017}. However, the algorithms in this line of work possess the drawback of having a large \emph{tolerance ratio}\footnote{A tester for monotonicity distinguishes a monotone function from a function that is $\epsilon$-far from monotone.
In this work, we use tolerance ratio $\alpha$ to  mean that the tester will accept a function $\epsilon/\alpha$-close to monotone.}, i.e. they will reject some functions that are extremely close to monotone. The more recent work of \cite{pallavoor2022approximating} gives a tester with tolerance ratio of $\tilde{O}(\sqrt{n})$, and this is the best tolerance ratio known\footnote{Note that Theorem 1.8 of \cite{CanonneGG0W16} gives a $2^{o(n)}$-query algorithm (no run-time bound is claimed). The run-time is still $2^{\Omega(n)}$ due to step 8 of their Algorithm 2 on page 31.} for a $2^{o(n)}$-run-time algorithm.  

As a simple corollary of our proper learning algorithm,
one can already achieve \emph{constant} tolerance ratio for monotonicity testing with  $2^{\Tilde{O}\left(\sqrt{n}/\epsilon \right)}$ run-time via a well-known connection between learning and testing \cite{goldreich1998property}. (We emphasize that, to draw this corollary, it is critical that the learning algorithm is proper and robust to noise in labels.) Even more, building more directly on our technical ideas, we present a constant-tolerance monotonicity tester with \emph{exponentially better dependence on $\epsilon$ of $2^{\Tilde{O}\left(\sqrt{n \log \frac{1}{\epsilon}}  \right)}$}. This approach also yields a tolerant monotonicity tester for functions over general posets, which we describe in detail in \Cref{cor:poset tester}. By a standard reduction \cite{Parnas2004TolerantPT}, this also gives an approximator for distance to monotonicity (within a constant multiplicative error plus an $\eps N$ additive error) in functions over general posets as well.



\subsection{Monotonicity correction via sorting on partially ordered sets.}

\textbf{The poset-sorting problem.} One of our core ideas is to create a new bridge between our learning task and recent exciting developments on parallelizing greedy algorithms \cite{Ghaffari16, GhaffariU19, Ghaffari22}. These developments accomplish speedups for classic graph problems in the setting of local and distributed computing. They build on a large body of work in local computation algorithms and distributed graph algorithms: some examples include
\cite{RTVX11, AlonRVX12, levi_local_2015, GoosHLMS15, 
reingold_new_2014, even_best_2014, ChangFGUZ19, EvenLMR21, ParterRVY19, ArvivL21, BrandtGR21, GrebikR21}.
In order to accomplish this, we introduce the problem of poset sorting, that is, sorting binary values on a partially ordered set (poset), which we believe is of independent interest: 
\begin{center}
\textit{
Let $P$ be a poset of $N$ elements with longest chain length at most $h$ and such that every element has at most $\Delta$ predecessors or successors. Given a binary labeling $f$ of elements of $P$, output a new binary labeling $f_\text{mon}$ that (i) is monotone with respect to the partial order in $P$ (ii) can be obtained from $f$ by a sequence of swaps of monotonicity-violating label pairs. 
}
\end{center}
Clearly, there is a greedy algorithm for this task that keeps swapping monotonicity-violating pairs of labels until there are none left. 
The challenge is to do this in a distributed fashion.
Among the many distributed and local computation models, the one that turns out relevant to us is the {\sl local computation algorithm} (LCA) model, defined formally in \Cref{sec:prelim}.
In brief, to be an LCA, an algorithm should be local in the sense that it should not need to read the entire assignment of labels to vertices in order to determine which label will end up at some particular vertex $x$. It should suffice to read only the labels of vertices that lie in a restricted neighborhood of $x$ --- ideally not much more than the set of vertices one would need to read in order to determine whether $x$'s own label violates monotonicity. We explain the algorithm itself in Subsection \ref{subsection: the algorithm summary}. 

\textbf{Proper learning via local correction.} 
Why is an algorithm for sorting on a poset $P$ relevant to proper learning? Suppose we obtained an improper predictor $f$ via the algorithm of \cite{bshouty1996fourier}, in the form of a small circuit computing $f$. Then, let monotone $f_\text{mon}$ be obtained from $f$ by flipping a sequence of monotonicity-violating labels.
It is not hard to argue that since $f$ is $O(\epsilon)$-close to the monotone $g$ we are trying to learn, then so is any $f_\text{mon}$ obtained in this manner (see \Cref{remark: sorting implies local correction} for more details). What the LCA allows us to do is to transform the circuit computing $f$ into a \emph{small} circuit computing such $f_\text{mon}$. 
The reason is that to evaluate $f_\text{mon}$ at a given element $x$ the LCA evaluates $f$ on only a small number of points and has an appropriately fast run-time. This allows us to augment the circuit for $f$ with a circuit that executes this sorting algorithm, and as a result obtain a small circuit for $f_\text{mon}$.

In other words, an LCA for sorting on a poset is a {\sl local corrector} for monotonicity: an algorithm that takes some input $x$, makes queries to a black-box function $f$, and outputs $f_{mon}(x)$ where $f_{mon}$ is a monotone function that is close to $f$ in Hamming distance, if such a function exists. 
Examples of local correctors for various function properties
can be found in 
\cite{AilonCCL08,  ACCL07, SS10, KalePS08,  CampagnaGR13, JhaR13, AJMR14, Comandur16}. A local corrector, combined with any improper hypothesis, yields a proper hypothesis. The efficiency of the LCA determines how quickly the proper hypothesis can be evaluated.



\subsection{Our LCA for sorting on a poset.}\label{subsection: the algorithm summary}

When the longest chain length $h$ is 1, then the poset-sorting problem is equivalent to the classical problem of finding a maximal matching on a bipartite graph with $N$ vertices and maximum degree at most $\Delta$. We note that a recent LCA by Ghaffari \blue{\cite{Ghaffari22}} handles this problem using a run-time of only \blue{$
\poly(\Delta, \log(N/\delta))$}. 

For larger values of $h$, a naive approach would be to execute a sequence of phases, in each of which a maximal matching between monotonicity-violating labels is produced and the labels that are matched with each other are swapped. One can show that $O(h)$ such phases suffice and sometimes necessary if the matchings are arbitrary. If the algorithm by Ghaffari \cite{Ghaffari22} is used to implement each of the phases, this yields a run-time of $\left(\blue{
\Delta \log(N/\delta)} \right)^{\Theta(h)}$.
For properly learning monotone functions over the Boolean cube, the parameters\footnote{These are the parameters for small constant $\epsilon$. A slight subtlety in our argument is that we need to work over truncated hypercube, i.e. handle separately $O(\epsilon)$ fraction of the elements with too high or too low Hamming weight. This is a standard technique, and in our case it makes the parameter $\Delta$ in our poset small enough for us to get a good run-time guarantee using our poset sorting algorithm.
} we are primarily interested in are $h =\Theta(\sqrt{n})$, $\Delta = 2^{\tilde{\Theta}(\sqrt{n})}$ and $N=\Theta(2^n)$.
The naive approach then gives us a run-time of $\left(\blue{
\Delta \log(N/\delta)} \right)^{\Theta(h)}=2^{\Omega(n)}$, which is too slow for our purposes.

We beat the naive approach by enforcing that the maximal matchings at each step only include pairs of vertices that are sufficiently far away in the graph.
We show that after each matching step, the greatest distance between pairs that violate monotonicity reduces by a factor of 2. This allows us to reduce the exponent from $h$ to $\log h$, which is sufficiently fast to yield an essentially optimal proper learning algorithm for monotone functions.




\subsection{Other related work}
The problem of locally correcting monotonicity has been studied in \cite{AilonCCL08,ACCL07, SS10, BGJ+10, AJMR14} 
in various parameter settings. 
The work of \cite{AilonCCL08} introduces the problem of online property reconstruction and gives an algorithm for correcting monotonicity for real-valued functions over the discrete number line. The work of \cite{Parnas2004TolerantPT} gives a tolerant tester in the same setting. The work of \cite{SS10} introduces the framework of \emph{local} property reconstruction, which is the same framework our approach uses (i.e. local correction by memoryless LCA). They give a local corrector for functions over the hypergrid $[d]^n$, with large dependence on the dimension $n$ but small dependence on $d$. Lower bounds for monotonicity correction in other error regimes are given in \cite{BGJ+10, AJMR14}. 
The problem of approximating the distance to monotonicity, which is strongly related to tolerant testing, has been studied in \cite{pallavoor2022approximating, ACCL07, Parnas2004TolerantPT}. 

A proper learning algorithm for a function class that generalizes monotone functions is given in \cite{CanonneGG0W16}.
Proper learning of restricted classes of monotone functions has been studied in \cite{Servedio2009OnTL,Ang88,YBC13,Servedio2009OnTL, Blanc2022ProperlyLD}. 
The question of weak learning of monotone functions has also received attention \cite{KearnsV89,bshouty1996fourier,BlumBL98,AmanoM02,ODonnellW09}. The latter line of work investigates proper learning algorithms that have very fast run-time at the cost of having accuracy of only $\frac{1}{2}+\frac{1}{\poly(n)}$.

In addition to the Boolean cube, testing monotonicity has also been studied on hypergrids, see for example
\cite{ChakrabartyS13a},
\cite{BlaisRY14}
\cite{ChakrabartyS13},
\cite{BlackCS18},
\cite{BlackCS20}. Also see
\cite{ChakrabartyS19}
for monotonicity testing of functions with bounded influence.


\subsection{Organization of this paper}
In \Cref{sec:LCA} we define the LCA model and state the maximal matching result of \cite{Ghaffari22}.
In \Cref{sec:main} we state and prove the main proper learning and testing results as consequences of our local poset sorting algorithm. 
In \Cref{sec:sorting} we present pseudocode for the local sorting algorithm and analyze its correctness and complexity.

\section{Preliminaries}
\label{sec:prelim}

\subsection{Notation (posets and distances)}

Let $x$ and $y$ be elements of a poset $P$.\footnote{In this work all posets are assumed to be finite.} We use $\preceq$ to denote the ordering relation on $P$. We say $x\prec y$ (``$x$ is a predecessor of $y$'') if $x\preceq y$ and $x\neq y$. Also, we say $x\succeq y$ if $y\preceq x$, and $x\succ y$ (``$x$ is a successor of $y$'') if $y\prec x$. We say $x$ and $y$ are \emph{incomparable} if neither $x\prec y$ nor $x\succ y$ holds.
We say that $x$ is an \emph{immediate predecessor} of $y$ if $x\prec y$ and there is no $z$ in $P$ for which $x\prec z \prec y$. The notion of an \emph{immediate successor} is defined analogously.

The \emph{graph} of a poset $P$ (a.k.a. \emph{Hasse diagram} of $P$) is a directed graph, in which elements of $P$ are the vertices, and there is an edge from $x$ to $y$ whenever $x$ is an immediate predecessor of $y$. 
Clearly, the graph of any poset is a DAG. Additionally, it is immediate that the poset itself is unambiguously determined by its graph, and we will refer to the poset and its graph interchangeably. 
\begin{definition}[Graph distance and height]
We will write $\dist(x,y)$ to denote the length of the \textbf{longest} directed path\footnote{If $x$ and $y$ are incomparable, then $\dist(x,y)$ is undefined.} between $x$ and $y$ in the Hasse diagram of $P$. The \textbf{height} of a poset $P$ is the length of the longest directed path between any two elements in $P$. 
\end{definition}
\begin{definition}[Function distance]
For a pair of functions $f_1, f_2: P\rightarrow \{0, 1\}$, the \textbf{distance} $\norm{f_1-f_2}$ is the fraction of elements $x$ in $P$ on which $f_1(x)\neq f_2(x)$.
\end{definition}
\begin{definition}[Distance to monotonicity]
For a function $f:P\rightarrow \{0, 1\}$, the \textbf{distance to monotonicity} is defined as $\min_{\text{monotone Boolean }f'}\left(\norm{f-f'}\right)$.
\end{definition}
\subsection{Agnostic learning setting}

Now, we formally describe the setting of agnostic learning under the uniform distribution. The learning algorithm is given i.i.d. example-label pairs from some distribution over $\{0,1\}^n \times \{0,1\}$, where the marginal distribution over examples is uniform. The \emph{generalization error} (which we also refer to simply as \emph{error}) of a predictor is the probability it misclassifies a fresh example-label pair. The goal of agnostic learning is to produce a predictor $\hat{g}$ with good generalization error. Guarantees are produced in terms of the lowest generalization error among all hypotheses in the function class (in our case monotone functions), which we denote as $\opt$.


\subsection{The LCA model}
\label{sec:LCA}

The Local Computation Algorithm (LCA) model captures the 
ability to provide query access to parts of an output in sublinear time.
In this work, we use the LCA model for the 
problems of maximal matching and 
poset sorting.   
An LCA is given access to a random bit-string
and to the input:  
for example, in the case of maximal matching,
this input is the adjacency list of the graph.
Upon receiving an edge query $e$ in $G$,
the LCA should respond ``yes'' or ``no.''
Responses to different
edges must be consistent with a single legal
maximal matching.  
Similarly, for poset sorting on poset $P$, 
a query to the LCA is any element $x \in P$, 
and the LCA, which is provided with access to
function $f:P \rightarrow \{0,1\}$, must respond with  $f_{\text{mon}}(x)$ so that
the function $f_{\text{mon}}$ is monotone.
Responses to different inputs must be consistent with 
a single legal sorting of $f$.
In both cases,  the LCA's answers
may depend on a random
bit-string\footnote{In all cases we consider, the random bit-string is short enough for the LCA to read as a whole (in contrast, in some work exponentially long bit-strings are considered). 
} that persists between queries, but otherwise, responses must be\footnote{Sometimes LCAs are considered that are not memoryless, but in this work whenever we refer to an LCA, we imply it is memoryless.}
{\em memoryless} -- i.e., they cannot depend on previous queries to or responses
from the LCA.

%
In terms of performance, we want the following quantities to be as small as possible (i) the \emph{query complexity}, i.e. the number of probes to the input the LCA makes to respond to a single query, (ii) the \emph{run-time} the LCA needs to respond to a query, (iii) the length of the random bit-string used by the LCA, (iv) the probability over the random bit-string that the LCA fails to satisfy the problem specifications. 

We will use a recent powerful result\footnote{We need to comment on some superficial differences between Theorem \ref{thm: Ghaffari Uitto LCA theorem} and the main theorem of \blue{\cite{Ghaffari22}}, which does not mention run-time and bit-string length explicitly and also has \blue{$\delta=\poly\left(\frac{1}{N}\right)$}. The following observations are not novel in any way, and some of them are alluded to in \blue{\cite{GhaffariU19, Ghaffari22}}, but we explain them here for completeness.
The bound on run-time follows by direct inspection of their algorithm. 
The length of the bit-string can be reduced to \blue{$
\poly(\Delta, \log(N/\delta))$} via the standard method \cite{AlonRVX12,levi_local_2015} of replacing i.i.d. random bits with $k$-wise independent random bits for $k$ large enough ($k=\blue{
\poly(\Delta, \log(N/\delta))}$ suffices, as number of those i.i.d. random bits accessed per query is \blue{$
\poly(\Delta, \log(N/\delta))$}). Although, the failure probability bound $\delta$ in the original theorem of \cite{Ghaffari22} is set to be \blue{$\poly\left(\frac{1}{N}\right)$}, it can be boosted to arbitrary $\delta$ by adding extra disconnected vertices to our graph until \blue{this value reaches} $\delta$ for the new number of vertices $N'$. Query access to this new graph can be simulated via query access to the original graph with inconsequential overhead. Overall, this costs one an extra $\polylog(1/\delta)$ factor in
query complexity and run-time.
}
of \blue{Ghaffari \cite{Ghaffari22}}, which gives an efficient algorithm for answering membership queries to a maximal independent set. (We emphasize that \emph{maximal} independent set is defined to be an independent set that cannot be made into a larger independent set by adding an extra vertex, and \emph{maximal matching} is defined analogously.) \blue{We note that
in an earlier version of this paper (which was written before \cite{Ghaffari22} was available) we used the theorem of Ghaffari and Uitto  \cite{GhaffariU19} for this purpose.   }
\begin{theorem}[\blue{\cite{Ghaffari22}}]\label{thm: Ghaffari Uitto LCA theorem}
There is an LCA that takes all-neighbor\footnote{I.e. when queried a vertex $v$, the oracle returns all the neighbors of $v$.} access to a graph $G$ with $N$ vertices and largest degree at most $\Delta$, and gives membership access to a \emph{maximal independent set} on the graph. The query complexity and the run-time of the LCA are \blue{$
\poly(\Delta, \log(N/\delta))$}, the length of the random bit-string is also \blue{$
\poly(\Delta, \log(N/\delta))$} and the failure probability is at most $\delta$.
\end{theorem}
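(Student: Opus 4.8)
The statement is, up to three cosmetic modifications, the main centralized-local-computation theorem of \cite{ghaffari_sparsifying_2018, GhaffariU19}, so the plan is to invoke that result as a black box and then patch the three discrepancies. First I would recall their LCA for maximal independent set: it simulates a sparsified variant of the randomized greedy (Luby-style) MIS procedure, and its analysis already delivers query complexity $\Delta^{O(\log\log\Delta)}\polylog(N)$ against an all-neighbor oracle, with failure probability $N^{-10}$ over a string of independent random bits. All remaining work is on top of this.

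The run-time bound is obtained by direct inspection of their algorithm: answering a single query explores a bounded neighborhood of the queried vertex, and each of the $\Delta^{O(\log\log\Delta)}\polylog(N)$ probes entails only $\polylog(N)$ additional local work (maintaining the sparsified dependency structure, reading the relevant random bits, comparing priorities), so the run-time is within a $\polylog(N)$ factor of the query complexity and is absorbed into the stated bound.

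To control the seed length, the plan is to replace the independent random bits with a $k$-wise independent family, where $k$ upper-bounds the number of random bits inspected while answering any single query; by the query-complexity bound, $k=\Delta^{O(\log\log\Delta)}\polylog(N/\delta)$ suffices, and such a family over a $\poly(N/\delta)$-size domain has seed length $O(k\log(N/\delta))$, which matches the claim. This is the step I expect to require the most care: one must check that the correctness argument of \cite{ghaffari_sparsifying_2018} only appeals to $k$-wise independence of the underlying bits---i.e. that it is a union/moment-bound argument over events each determined by at most $k$ bits, as is the case for the Luby-type primitives it is built from. This substitution is exactly the one carried out generically for LCAs in \cite{AlonRVX12, levi_local_2015}, so I would cite that machinery rather than redo it.

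Finally, to amplify the failure probability from $N^{-10}$ to an arbitrary $\delta$, I would pad the input graph with isolated vertices until the new vertex count $N'$ satisfies $(N')^{-10}\le\delta$, i.e. $N'=\Theta(\delta^{-1/10})$; all-neighbor access to the padded graph is simulated from access to the original with only inconsequential overhead, and a maximal independent set on the padded graph restricts to one on the original. This inflates $N$ by a $\poly(1/\delta)$ factor inside the already-present polylogarithms, costing an extra $\polylog(1/\delta)$ in query complexity, run-time, and seed length. Combining the three patches with the cited theorem gives the statement; the only non-bookkeeping point is the bounded-independence substitution flagged above.
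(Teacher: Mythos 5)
Your proposal matches the paper's justification essentially point for point: the paper handles this theorem in a footnote that likewise (i) bounds the run-time by direct inspection of the algorithm of \cite{ghaffari_sparsifying_2018}, (ii) shortens the seed via the standard $k$-wise independence substitution of \cite{AlonRVX12, levi_local_2015} with $k=\Delta^{O(\log\log\Delta)}\polylog(N/\delta)$, and (iii) boosts the failure probability from $N^{-10}$ to $\delta$ by padding with disconnected vertices, at a $\polylog(1/\delta)$ overhead. The argument is correct and takes the same route as the paper.
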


\begin{corollary}
\label{cor:matchings}
There is an LCA that takes all-neighbor access to a graph $G$, with $N$ vertices and largest degree at most $\Delta$, and gives membership access to a \emph{maximal matching} on the graph. The query complexity and the run-time of the LCA is \blue{$
\poly(\Delta, \log(N/\delta))$}, the length of the random bit-string is also \blue{$
\poly(\Delta, \log(N/\delta))$} and the failure probability is at most $\delta$.
\end{corollary}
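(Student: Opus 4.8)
The plan is to reduce maximal matching on $G$ to maximal independent set on the \emph{line graph} $L(G)$ and then invoke \Cref{thm: Ghaffari Uitto LCA theorem}. Recall that $L(G)$ has one vertex for each edge of $G$, and two such vertices are adjacent precisely when the corresponding edges of $G$ share an endpoint. Under this correspondence, independent sets of $L(G)$ are exactly matchings of $G$, and \emph{maximal} independent sets of $L(G)$ are exactly \emph{maximal} matchings of $G$. Consequently, an LCA giving membership access to a maximal independent set of $L(G)$ is, essentially verbatim, an LCA giving membership access to a maximal matching of $G$: on an edge query $e$, we return the answer of the MIS-LCA on the vertex $e$ of $L(G)$, and consistency of those answers with a single maximal independent set translates directly into consistency of our answers with a single maximal matching.

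Next I would check that the parameters of $L(G)$ are good enough to preserve the stated bounds. The number of vertices of $L(G)$ is $N' = |E(G)| \le N\Delta/2 \le N^2$, so any $\polylog(N'/\delta)$ factor is $\polylog(N/\delta)$ up to a constant. The maximum degree of $L(G)$ is at most $2(\Delta-1) \le 2\Delta =: \Delta'$, since the vertex $\{u,v\}$ of $L(G)$ is adjacent only to the other edges of $G$ incident to $u$ or to $v$; hence $(\Delta')^{O(\log\log \Delta')} = \Delta^{O(\log\log \Delta)}$. Substituting $N'$ and $\Delta'$ into \Cref{thm: Ghaffari Uitto LCA theorem} then yields exactly the claimed query complexity, run-time, bit-string length, and failure probability, with the random bit-string and the failure event inherited unchanged from the run of the MIS-LCA on $L(G)$.

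The one genuinely operational point is that the MIS-LCA expects all-neighbor access to $L(G)$, which we must simulate using all-neighbor access to $G$. I would fix a canonical identifier for each edge of $G$, say the lexicographically sorted pair of its endpoints, so that ``the vertex $e$ of $L(G)$'' is unambiguous and the MIS-LCA sees a single well-defined graph. To answer an all-neighbor query on vertex $e = \{u,v\}$ of $L(G)$, query $u$ and $v$ in $G$ to obtain all edges incident to each, and return the union of these two lists with $e$ itself removed. This takes $O(\Delta)$ probes to $G$ and $O(\Delta)$ additional time, which is dominated by the per-query cost of the MIS-LCA, so the overhead is absorbed into the asymptotic bounds.

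I do not expect any serious obstacle: the two things to get right are the degree bound $\Delta' \le 2\Delta$ on the line graph (so that the $\log\log$ factor in the exponent does not degrade) and the bookkeeping that converts $G$-neighbor queries into $L(G)$-neighbor queries under a fixed canonical edge naming.
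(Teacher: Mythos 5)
Your proposal is correct and matches the paper's own proof essentially verbatim: both reduce maximal matching on $G$ to maximal independent set on the line graph, invoke \Cref{thm: Ghaffari Uitto LCA theorem}, observe that the line graph has at most $N\Delta$ vertices and degree at most $2\Delta$ so the stated bounds are preserved, and simulate an all-neighbor query to the line graph with two all-neighbor queries to $G$. The canonical edge-naming detail you add is a reasonable piece of bookkeeping that the paper leaves implicit.
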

\begin{proof}
This reduction is standard; see \Cref{sec: proof of cor:matchings} for details.
\end{proof}

%
%
\subsection{Boolean hypercube.}
\begin{definition}
The \emph{$n$-dimensional Boolean hypercube} is the set $\{0,1\}^n$.
For $x,y\in \{0,1\}^n$, we say $x\preceq y$ if for all $i\in\{1,\cdots,n\}$ one has $x_i\leq y_i$. It is immediate that $\{0,1\}^n$ is a poset with $2^n$ elements. 

We also define the truncated hypercube 
$$
H^n_\epsilon:=\fig{x \in \{0,1\}^n:~\abs{\sum_ix_i-\frac{n}{2}}\leq \sqrt{\frac{n}{2}\log \frac{2}{\epsilon}}},
$$
Via Hoeffding's bound, we have that the fraction of elements in $\{0,1\}^n$ that are not also in $H_n^\epsilon$ is at most $2\exp\left(-\frac{2t^2}{n}\right)=\epsilon$.
\end{definition}

\textbf{Known results about learning monotone functions over Boolean hypercube.}
\begin{theorem}[Learnability of monotone functions \cite{bshouty1996fourier}]\label{thm: bshouty-tamon realizable}
There is an algorithm that, for any monotone function $g:\{0,1\}^n\rightarrow \{0,1\}$, given i.i.d. example-label pairs $(x_i,g(x_i))$, with $x_i$ uniform in $\{0,1\}^n$, returns a circuit computing a predictor $\hat{g}$, such that $\norm{g-\hat{g}}\leq \epsilon$.
The algorithm uses $\blue{n^{O\left(\frac{\sqrt{n}}{\epsilon}\right)}}\log\left(\frac{1}{\delta}\right)$ samples and run-time, where $\delta$ is the failure probability bound.
\end{theorem}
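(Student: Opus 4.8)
The plan is to run the \emph{Low-Degree Algorithm}: estimate the Fourier coefficients of the target up to a carefully chosen degree and output the sign of the resulting multilinear polynomial. This runs in $2^{\tilde O(\sqrt n/\epsilon)}$ time precisely because monotone Boolean functions have small total influence, hence strong low-degree Fourier concentration. Passing to the $\pm1$ encoding $g:\{-1,1\}^n\to\{-1,1\}$, the first step I would carry out is the bound $I[g]\le\sqrt n$. For monotone $g$ one has $\mathrm{Inf}_i[g]=\hat g(\{i\})$ for every coordinate $i$ --- conditioning on $x_{-i}$, both quantities equal $\tfrac12 g(x^{i\to 1})-\tfrac12 g(x^{i\to-1})$, which is nonnegative by monotonicity --- so by Parseval $\sum_i\hat g(\{i\})^2\le 1$, and Cauchy--Schwarz gives $I[g]=\sum_i\hat g(\{i\})\le\sqrt n$. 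This is the only place monotonicity enters.

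Next I would convert this into a Fourier tail bound: for any Boolean $g$ and any integer $d$,
\[
\sum_{|S|>d}\hat g(S)^2\;\le\;\frac1d\sum_S|S|\,\hat g(S)^2\;=\;\frac{I[g]}{d}\;\le\;\frac{\sqrt n}{d}.
\]
Choosing $d=\lceil 4\sqrt n/\epsilon\rceil=\Theta(\sqrt n/\epsilon)$, the truncation $p^\star:=\sum_{|S|\le d}\hat g(S)\chi_S$ satisfies $\mathbb E[(p^\star-g)^2]\le\epsilon/4$. The number of surviving coefficients is $M:=\binom{n}{\le d}$, and since $\log\binom{n}{\le d}\le d\log(en/d)=\tilde O(\sqrt n/\epsilon)$ we have $M=2^{\tilde O(\sqrt n/\epsilon)}$.

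The algorithmic step is then: draw $m$ i.i.d. examples $(x_j,g(x_j))$, form $\tilde g(S)=\frac1m\sum_j g(x_j)\chi_S(x_j)$ for each $|S|\le d$, set $p=\sum_{|S|\le d}\tilde g(S)\chi_S$, and output a circuit computing $\hat g(x)=\mathrm{sgn}(p(x))$ (rewritten in $\{0,1\}$ notation via $\chi_S(x)=\prod_{i\in S}(1-2x_i)$). Each $\tilde g(S)$ is an average of $m$ numbers in $[-1,1]$ with mean $\hat g(S)$, so Hoeffding plus a union bound over the $M$ sets shows that $m=2^{\tilde O(\sqrt n/\epsilon)}\log(1/\delta)$ samples suffice to guarantee $|\tilde g(S)-\hat g(S)|\le\sqrt{\epsilon/(4M)}$ for all $S$ at once, except with probability $\delta$. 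On that event $\mathbb E[(p-p^\star)^2]=\sum_{|S|\le d}(\tilde g(S)-\hat g(S))^2\le\epsilon/4$, hence $\mathbb E[(p-g)^2]\le\epsilon$ by the $L_2$ triangle inequality; and since $g$ is $\pm1$-valued, $\mathrm{sgn}(p(x))\ne g(x)$ forces $(p(x)-g(x))^2\ge1$, so $\norm{g-\hat g}=\Pr_x[\hat g(x)\ne g(x)]\le\mathbb E[(p-g)^2]\le\epsilon$. The running time is dominated by evaluating the $M$ monomials on each of the $m$ samples, i.e.\ $M\cdot m\cdot\mathrm{poly}(n)=2^{\tilde O(\sqrt n/\epsilon)}\log(1/\delta)$, and the resulting circuit has size $2^{\tilde O(\sqrt n/\epsilon)}$.

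The only delicate point --- the part I would be most careful about --- is the error accounting: one splits the budget $\epsilon$ between the Fourier tail (paid for by the cutoff $d$) and the estimation error (paid for by $m$ and the union bound), and must verify that the per-coefficient accuracy $\sqrt{\epsilon/(4M)}$, which is exponentially small, is still achievable with only $2^{\tilde O(\sqrt n/\epsilon)}$ samples --- which it is, because $M=2^{\tilde O(\sqrt n/\epsilon)}$ and Hoeffding requires only $\mathrm{poly}(M)\log(M/\delta)$ samples for this accuracy and confidence. No monotone rounding of $\hat g$ is required, since the theorem asks only for an accurate predictor, not a monotone one.
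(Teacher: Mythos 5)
Your proposal is correct, and it is essentially the standard proof of the Bshouty--Tamon result that the paper simply cites without reproducing: bound the total influence of a monotone Boolean function by $\sqrt n$ via $\mathrm{Inf}_i[g]=\hat g(\{i\})$ plus Cauchy--Schwarz, deduce Fourier concentration up to degree $O(\sqrt n/\epsilon)$, and run the Low-Degree (Linial--Mansour--Nisan) algorithm with empirical coefficient estimation and a Hoeffding/union-bound accounting of the estimation error. The paper offers no proof of this theorem to compare against --- it is an imported result --- but your reconstruction matches the argument in the cited reference, including the split of the $\epsilon$ budget between tail truncation and estimation error, and the resulting sample and time bounds of $2^{\tilde O(\sqrt n/\epsilon)}\log(1/\delta)$.
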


The standard theorem below follows via low-degree concentration result of \cite{bshouty1996fourier}, Remark 4 on page 6 of \cite{kalai_agnostically_2005}, the refinement\footnote{We note that in an earlier version of this work we were not aware of the refinement of \cite{feldman_tight_2020}, so the $\epsilon$-dependence in the \Cref{thm: bshouty-tamon agnostic}, as well as in \Cref{thm: proper learning semi-agnostic}, was slightly worse.} of \cite{feldman_tight_2020} and standard failure probability reduction via repetition. Please refer to \Cref{sec: best improper  agnostic learning algorithm from the literature.} for more detail.
\begin{theorem}[Agnostic learnability of monotone functions]\label{thm: bshouty-tamon agnostic}
In the agnostic setting with examples distributed uniformly on $\{0,1\}^n$, there is an algorithm that returns a circuit with generalization error at most $\textnormal{opt}+\epsilon$, where $\textnormal{opt}$ is the error of the best monotone function. The algorithm uses $\blue{n^{O\left(\frac{\sqrt{n}}{\epsilon}\log \frac{1}{\epsilon}\right)}}\log\left(\frac{1}{\delta}\right)$ samples and run-time, where $\delta$ is the failure probability bound.

\end{theorem}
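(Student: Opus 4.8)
\medskip
\noindent\textit{Proof plan.} The plan is to combine the low-degree $L_1$-polynomial-regression agnostic learner of Kalai, Klivans, Mansour and Servedio (Remark~4 of \cite{kalai_agnostically_2005}) with the Fourier concentration bound for monotone functions of Bshouty and Tamon \cite{bshouty1996fourier}, and then to boost the failure probability by independent repetition. The concentration input is the following: every monotone $g:\{0,1\}^n\to\{0,1\}$ has total influence $O(\sqrt n)$, so Markov's inequality applied to its Fourier spectrum yields $\sum_{|S|>d}\hat g(S)^2\le\epsilon^2$ for $d=\tilde{O}(\sqrt n/\epsilon^2)$, with the $\tilde{O}$ absorbing the logarithmic refinements of \cite{bshouty1996fourier}. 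Equivalently, the degree-$\le d$ Fourier truncation $g_{\le d}$ is a real polynomial of degree $\le d$ with $\norm{g-g_{\le d}}_2\le\epsilon$, hence also with $\norm{g-g_{\le d}}_1\le\epsilon$ since the domain carries a probability measure.

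With this $d$ fixed, I would run $L_1$ polynomial regression: draw $m=\mathrm{poly}(n^d,1/\epsilon,\log(1/\delta))$ labeled examples, and over the $\binom{n}{\leq d}=n^{O(d)}$ monomials of degree $\le d$ solve the linear program that minimizes the empirical loss $\frac1m\sum_i\abs{p(x_i)-y_i}$ subject to the convex constraint $\|p\|_2\le1$ (which is satisfied by $g^\ast_{\le d}$, where $g^\ast$ is the best monotone function); then clamp the solution $p$ to $[0,1]$ and output the Boolean hypothesis $h(x)=\mathbf{1}[p(x)\ge\theta]$ for whichever of $O(1/\epsilon)$ candidate thresholds $\theta$ minimizes error on a fresh validation sample. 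Each such $h$ is evaluated by a circuit of size $n^{O(d)}=2^{\tilde{O}(\sqrt n/\epsilon^2)}$.

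For the analysis I would set $\textnormal{opt}=\norm{g^\ast-y}_1$. The concentration bound makes $g^\ast_{\le d}$ feasible with true $L_1$ loss at most $\textnormal{opt}+\epsilon$, so a standard uniform-convergence argument over the norm-bounded degree-$d$ polynomials shows the empirical minimizer $p$ has true $L_1$ loss at most $\textnormal{opt}+2\epsilon$. Since $\mathbb{E}_{\theta\sim[0,1]}\big[\Pr_x[\mathbf{1}[p(x)\ge\theta]\ne y]\big]=\norm{p-y}_1$ for $0/1$ labels and $p\in[0,1]$, the best of the candidate thresholds achieves error at most $\textnormal{opt}+3\epsilon$; rescaling $\epsilon$ by a constant gives the stated accuracy. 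The algorithm so far succeeds with constant probability, and running $O(\log(1/\delta))$ independent copies while keeping the one with smallest error on an extra validation sample of size $O(\epsilon^{-2}\log(1/\delta))$ drives the failure probability below $\delta$ at a multiplicative $O(\log(1/\delta))$ cost, so the total sample and time bound is $2^{\tilde{O}(\sqrt n/\epsilon^2)}\log(1/\delta)$.

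The step I expect to be the crux --- and the reason one must use $L_1$ rather than $L_2$ regression --- is that \cite{bshouty1996fourier} supplies only $L_2^2$ (``weak $L_2$'') concentration, which is far too lossy to approximate $g$ pointwise; the additive guarantee $\textnormal{opt}+\epsilon$ is nonetheless attainable because the agnostic error of a Boolean-labeled problem is an $L_1$ quantity while $L_1$ distance is dominated by $L_2$ distance, so $L_1$ regression converts the weak $L_2$ concentration into exactly the bound needed, whereas naive $L_2$ regression followed by rounding would pay a constant factor on $\textnormal{opt}$. The remaining ingredients --- the norm bound used for uniform convergence, reconciling the $\pm1$ versus $0/1$ conventions of the two cited works, and the threshold-selection step --- are routine.
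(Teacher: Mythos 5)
Your proof is correct and follows exactly the route the paper itself takes, which is stated there only as a one-line citation to the Bshouty--Tamon Fourier concentration bound, Remark~4 of \cite{kalai_agnostically_2005} (the $L_1$ polynomial regression agnostic learner), and standard repetition-plus-validation to drive down the failure probability. You have simply unpacked those three citations into explicit steps, including the correct degree bound $d=\tilde{O}(\sqrt n/\epsilon^2)$ obtained by converting the $L_2^2$ concentration into an $L_1$ approximation guarantee, and the correct observation that $L_1$ (rather than $L_2$) regression is what yields the additive $\opt+\epsilon$ guarantee.
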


\section{Main result and consequences}
\label{sec:main}

We first present the formal statement for our LCA for the poset sorting problem, from which every other result in this section is derived. The algorithm and analysis are presented in \Cref{sec:sorting}. 
\begin{restatable}{theorem}{main}
\label{thm: the local corrector for general posets}
Let $P$ be a poset of $N$ elements and height $h$, such that each element in $P$ has at most $\Delta$ predecessors or successors.
Suppose we are given query access to the graph of $P$, i.e. for any $x \in P$ we can obtain the immediate predecessors or successors of $x$.
Also, suppose we are given query access to some Boolean function $f$ over $P$.
Then, there is an LCA that solves the poset-sorting problem for $f$ over $P$: in other words, it provides query access to a monotone function $f_{mon}: P \to \zo$ that can be obtained from $f$ by a sequence of swaps of monotonicity-violating label pairs.
The LCA has query complexity and run-time of $\left(\Delta \log\left(\frac{N}{\delta}\right)\right)^{O(\log h)}$, and it uses a random bit-string of length $\poly(\Delta \log \frac{N}{\delta})$.
The failure probability of the LCA is at most $\delta$.

\end{restatable}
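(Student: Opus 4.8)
The plan is to implement the obvious greedy poset‑sorting process in $T=O(\log h)$ \emph{rounds}, where round $i$ finds a maximal matching among the monotonicity‑violating pairs that are \emph{sufficiently far apart} in the Hasse diagram, and then swaps the labels of each matched pair. Here a violating pair of a labeling $f'$ is a pair $x\prec y$ with $f'(x)=1$ and $f'(y)=0$. The structural heart of the argument is a \textbf{distance‑halving lemma}: if every violating pair $(x,y)$ of the current labeling satisfies $\dist(x,y)\le D$, and we swap a maximal matching among violating pairs with $\dist(x,y)>D/2$, then every violating pair of the resulting labeling satisfies $\dist(x,y)\le \lfloor D/2\rfloor$. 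Since every violating pair of the input $f$ has distance at most the height $h$, iterating this gives a distance bound $\lfloor h/2^i\rfloor$ after round $i$, so after $T=\lfloor\log_2 h\rfloor+1$ rounds there are no violating pairs at all, i.e.\ the final labeling $f_{\text{mon}}$ is monotone. Within a round the matched pairs are vertex‑disjoint and each is violating at the moment it is swapped (the other swaps do not touch its endpoints), so the round is a sequence of violating‑pair swaps; chaining the rounds shows $f_{\text{mon}}$ is obtained from $f$ by a sequence of violating‑pair swaps, as the theorem requires.

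To prove the distance‑halving lemma I would case on how each endpoint of a post‑round violating pair $(a,b)$, $a\prec b$, acquired its label. Vertex $a$ has label $1$ either because it was unmatched and already had label $1$, or because it was matched to some $a''$ with $a''\prec a$, old label of $a''$ equal to $1$, and $\dist(a'',a)>D/2$; symmetrically $b$ has label $0$ either because it was unmatched with old label $0$, or because it was matched to some $b'$ with $b\prec b'$, old label $0$, and $\dist(b,b')>D/2$. If both $a,b$ are unmatched then $(a,b)$ was violating before the round, so maximality of the matching forces $\dist(a,b)\le D/2$. In a mixed case, say $a$ matched and $b$ unmatched, the ``outer'' pair $(a'',b)$ is violating before the round, hence $\dist(a'',b)\le D$; super‑additivity of longest‑path length along $a''\prec a\prec b$ gives $\dist(a'',b)\ge\dist(a'',a)+\dist(a,b)>D/2+\dist(a,b)$, so $\dist(a,b)<D/2$. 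The symmetric mixed case is identical. If both $a,b$ are matched, the outer pair $(a'',b')$ would satisfy $\dist(a'',b')>D/2+\dist(a,b)+D/2>D$, contradicting the hypothesis, so that case is vacuous. Hence every post‑round violating pair has integer distance at most $D/2$, i.e.\ at most $\lfloor D/2\rfloor$.

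For the LCA itself I would realize round $i$ as an instance of the maximal‑matching LCA of \Cref{cor:matchings} run on the graph $G_i$ whose vertex set is $P$ and whose edges are the violating pairs of the post‑round‑$(i-1)$ labeling $f^{(i-1)}$ that are at distance $>D_{i-1}/2$. Each vertex of $G_i$ has degree at most $\Delta$: if its label is $1$ its $G_i$‑neighbors are successors with label $0$, and if its label is $0$ they are predecessors with label $1$. All‑neighbor access to $G_i$ is simulated locally: given a vertex $v$, query the Hasse‑diagram oracle for its immediate (hence all $\le\Delta$) successors or predecessors, recursively evaluate $f^{(i-1)}$ at $v$ and at each of these $\le\Delta$ comparable elements, and for each candidate edge $(x,y)$ compute $\dist(x,y)$ by a longest‑path DP over the interval $[x,y]=\{z:x\preceq z\preceq y\}$, which has at most $\Delta$ elements (being contained in the successor‑set of $x$) and is explorable in $\mathrm{poly}(\Delta)$ Hasse‑diagram queries. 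To answer a top‑level query $f_{\text{mon}}(q)$, run this for $i=1,\dots,T$: in round $i$, compute the $\le\Delta$ edges of $G_i$ incident to the relevant vertex, ask the round‑$i$ matching LCA about each to learn whether and how that vertex is matched, and update its label by the corresponding swap; the base case $f^{(0)}=f$ is one oracle call. The whole computation is a deterministic function of $q$ and of the persistent random string — a concatenation of $T$ independent blocks, one per round — hence memoryless and globally consistent with a single labeling $f_{\text{mon}}$, which is monotone on the event that all $T$ matching LCAs succeed.

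Finally I would bound resources by unrolling. Let $Q_i$ be the cost of one evaluation of $f^{(i)}$. The simulation above, using \Cref{cor:matchings} with degree bound $\Delta$ and failure parameter $\delta/T$, gives $Q_i\le C\cdot Q_{i-1}+C$ with $C=\Delta^{O(\log\log\Delta)}\,\polylog(N/\delta)$ (the factor $C$ being the matching LCA's query complexity on $G_i$ times the $\mathrm{poly}(\Delta)$ work per $G_i$‑neighbor query and the $\le\Delta$ incident‑edge queries used to read off a vertex's match), and $Q_0=1$. Unrolling $T=O(\log h)$ times yields $Q_T=O(T\,C^{T})=\Delta^{O(\log h\log\log\Delta)}\bigl(\log(N/\delta)\bigr)^{O(\log h)}$, which bounds both the query complexity and the run‑time. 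The random bit‑string is the concatenation of the $T$ blocks, each of length $\Delta^{O(\log\log\Delta)}\polylog(N/\delta)$, and since $h\le N$ we have $T=O(\log h)=\polylog(N)$, so the total length is still $\Delta^{O(\log\log\Delta)}\polylog(N/\delta)$. A union bound over the $T$ rounds makes the overall failure probability at most $\delta$. I expect the main obstacle to be the distance‑halving lemma together with the verification that $G_i$ is locally accessible despite its edges encoding comparabilities of distance up to $h$ — the point that the relevant intervals contain at most $\Delta$ elements is what makes the local distance computation (and hence the whole scheme) go through.
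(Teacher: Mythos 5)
Your proposal is correct and takes essentially the same approach as the paper: a $\Theta(\log h)$-stage global sorting algorithm (maximal matching in the $k$-violation graph with a distance threshold that halves each round, then swap), correctness via a distance-halving lemma proved by the same case analysis on how each endpoint of a surviving violating pair acquired its label (using super-additivity of longest-path distance), and a local implementation that composes the Ghaffari--Uitto maximal-matching LCA at each stage and bounds the cost by the same $T(f_{i+1}) \le \Delta^{O(\log\log\Delta)}\polylog(N/\delta)\cdot T(f_i)$ recurrence. Your formulation of the distance-halving lemma (matching at distance $>D/2$ yields bound $\lfloor D/2\rfloor$) is a clean specialization of the paper's more general Lemma~\ref{lemma: path shortening lemma} (matching at distance $\geq k$ yields $\max(k-1,D-k)$), but the arguments are the same in substance.
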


\begin{restatable}{remark}{refinement remark}
\label{remark: run-time refinement}
In the setting of Theorem \ref{thm: the local corrector for general posets}, 
suppose each element of $P$ has at most $n$ immediate predecessors or successors,
and furthermore, that $P$ is graded 
(all paths between $u$ and $v$ for any $u,v$ have the same length). 
Then query complexity and the run-time of our LCA is also bounded by $n^{O(h)}  (\log(N/\delta))^{\log h}$.
\blue{The number of 
random bits used is at most  $n^{O(h)}\polylog(N/\delta)$.}
In particular, when $P$ is the truncated hypercube $H_\eps^n$ and $\delta = 2^{-10n}$,
the query and time complexity are $n^{O\left(\sqrt{n \log \frac{1}{\eps}}\right)}$.
\end{restatable}

The proof of \Cref{remark: run-time refinement} is given in \Cref{sec: refined local implementation}.

\begin{proposition}[Local correction]
\label{remark: sorting implies local correction}
In the setting of the previous problem, the distance between $f$ and $f_\text{mon}$ is at most twice the distance of $f$ to monotonicity. Furthermore, the following extra property holds: for any monotone $q:P\rightarrow  \{0,1\}$ we have $\norm{f_\text{mon}-q} \leq \norm{f-q}$. 
\end{proposition}
\begin{proof}
We first show the extra property. Recall that $f_\text{mon}$ can be obtained from $f$ via a sequence of swaps of monotonicity-violating labels. Since $q$ is monotone, as a result of every single of this swaps the distance to $q$ will either decrease or stay the same. Overall across all the swaps, this means that $\norm{f_\text{mon}-q} \leq \norm{f-q}$.

Taking $q$ to be the closest monotone function to $f$ and using the triangle inequality, we see that the distance between $f$ and $f_\text{mon}$ is at most twice the distance of $f$ to monotonicity.
\end{proof}
The two corollaries about tolerant testing of monotone functions follow from our theorem above. We note that the success probability of $2/3$ can be improved via repetition to $1-\delta$ at the cost of $\log \left(\frac{1}{\delta}\right)$ multiplicative factor in run-time and query complexity. We also note that the inverse tolerance ratio, given as $0.49$, can be improved to any absolute constant less than $0.5$.

\begin{corollary}[Tolerant testing for the Boolean cube]
\label{cor:cube tester}
Suppose we are given query access to an unknown Boolean function $f$. Then, there is an algorithm that uses $\blue{n^{O\left(\sqrt{n\log\frac{1}{\epsilon}}\right)}}$ queries and run-time, and distinguishes whether the function $f$ is $0.49\epsilon$-close or $\epsilon$-far from monotone.
The failure probability is at most $2/3$.
\end{corollary}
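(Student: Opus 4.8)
The plan is to apply the local monotonicity corrector of \Cref{thm: the local corrector for general posets} on a suitably truncated hypercube, and then simply estimate how often $f$ disagrees with the monotone function the corrector produces; the factor-$2$ slack in local correction is exactly why the stated tolerance constant is an absolute constant below $1/2$ rather than $1/2$ itself.

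First I would fix a small absolute constant $\gamma>0$ and let $P := H^n_{\gamma\epsilon}$ carry the order induced from $\{0,1\}^n$; by the Hoeffding bound recorded after the definition of $H^n_\epsilon$, all but a $\gamma\epsilon$ fraction of $\{0,1\}^n$ lies in $P$. Writing $t := \sqrt{(n/2)\log(2/(\gamma\epsilon))}$, the poset $P$ has $N \le 2^n$ elements, height $h \le 2t$, and at most $\Delta \le \sum_{j=0}^{2t}\binom{n}{j} \le n^{2t+1}$ predecessors or successors per element (any predecessor of $x \in P$ is obtained from $x$ by flipping at most $2t$ ones to zeros, and symmetrically for successors), so $\log\Delta = \tilde{O}(\sqrt{n\log(1/\epsilon)})$ while $\log h$ and $\log\log\Delta$ are polylogarithmic. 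The Hasse diagram of $P$ is navigable using only bit-flips and Hamming-weight checks, and $f$ restricted to $P$ is available from the query oracle, so \Cref{thm: the local corrector for general posets} (with its failure probability set to, say, $1/10$, using a single shared random bit-string) yields an LCA computing a monotone $f_{\text{mon}}^P : P \to \{0,1\}$ with query complexity and run-time $\Delta^{O(\log h\,\log\log\Delta)}(\log N)^{O(\log h)} = 2^{\tilde{O}(\sqrt{n\log(1/\epsilon)})}$. I extend $f_{\text{mon}}^P$ to a monotone $f_{\text{mon}} : \{0,1\}^n \to \{0,1\}$ by setting it to $0$ (resp.\ $1$) on points of Hamming weight below (resp.\ above) the band defining $P$. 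The tester then draws $\Theta(1/\epsilon^2)$ uniform $x \in \{0,1\}^n$, queries $f(x)$ and $f_{\text{mon}}(x)$ (the latter via the LCA when $x \in P$), and accepts iff the empirical disagreement rate is below a threshold at roughly $0.995\epsilon$.

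For correctness: in the far case $f_{\text{mon}}$ is monotone, so $\norm{f - f_{\text{mon}}} \ge \epsilon$. In the close case, let $q$ be monotone on $\{0,1\}^n$ with $\norm{f-q} \le 0.49\epsilon$; then $q$ restricted to $P$ is monotone on $P$ and $f,q$ disagree on at most a $0.49\epsilon/(1-\gamma\epsilon)$ fraction of $P$, so by the extra property of \Cref{remark: sorting implies local correction} ($f_{\text{mon}}^P$ agrees with $q$ on $P$ at least as often as $f$ does) and the triangle inequality, $f$ and $f_{\text{mon}}^P$ disagree on at most a $2\cdot 0.49\epsilon/(1-\gamma\epsilon)$ fraction of $P$; adding the at most $\gamma\epsilon$ fraction outside $P$, $\norm{f - f_{\text{mon}}} \le 0.98\epsilon/(1-\gamma\epsilon) + \gamma\epsilon < 0.99\epsilon$ once $\gamma$ is a small enough constant. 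The two cases are thus separated by an $\Omega(\epsilon)$ additive gap in $\norm{f - f_{\text{mon}}}$, which $\Theta(1/\epsilon^2)$ samples estimate to within $o(\epsilon)$ with probability $9/10$ by a Hoeffding bound; together with the $1/10$ LCA failure probability this gives constant overall success probability, boosted to $1-\delta$ by $O(\log(1/\delta))$ independent repetitions and a majority vote. The $\mathrm{poly}(n)$ and $\mathrm{poly}(1/\epsilon)$ factors are absorbed into $2^{\tilde{O}(\sqrt{n\log(1/\epsilon)})}$, giving the claimed complexity.

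The heavy lifting is entirely in \Cref{thm: the local corrector for general posets}, which I take as given; the main things to get right here are (i) that the corrector loses a factor of exactly $2$ in distance, so the tolerance ratio is only just above $2$ and one must check that the truncation loss $\gamma\epsilon$ together with the density correction $1/(1-\gamma\epsilon)$ still leave the close-case bound strictly below $\epsilon$ — this is precisely what forces the tolerance constant to be an arbitrary absolute constant below $1/2$ — and (ii) the parameter bookkeeping, namely verifying $\log\Delta = \tilde{O}(\sqrt{n\log(1/\epsilon)})$ and that the overhead $\Delta^{O(\log h\log\log\Delta)}(\log N)^{O(\log h)}$ stays within $2^{\tilde{O}(\sqrt{n\log(1/\epsilon)})}$, which holds because $\log h$ and $\log\log\Delta$ are polylogarithmic. (For the extreme regime $\log(1/\epsilon) \gtrsim n$ the truncation is vacuous and every bound degrades to $2^{\tilde{O}(n)}$, which still matches the target.)
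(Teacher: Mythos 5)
Your proposal is correct and follows essentially the same route as the paper: apply the LCA of \Cref{thm: the local corrector for general posets} to the truncated cube $H^n_{\gamma\epsilon}$, extend the corrected function monotonically to all of $\{0,1\}^n$, estimate $\norm{f-f_{\text{mon}}}$ with $\Theta(1/\epsilon^2)$ samples, and threshold; the only cosmetic differences are that the paper fixes $\gamma=0.005$ while you keep it a free small constant, and your bookkeeping tracks fractions of $P$ (incurring a harmless $1/(1-\gamma\epsilon)$ correction) whereas the paper counts absolute disagreements normalized by $2^n$.
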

\begin{proof}

We use the truncated hypercube $H^n_{0.005\epsilon}$ as our poset when using
Theorem \ref{thm: the local corrector for general posets} (also using the refined run-time of \Cref{remark: run-time refinement}). This allows us to gain query access to a monotone $f_\text{mon}$ on $H^n_{0.005\epsilon}$. Extend $f_\text{mon}$ to all of $\{0,1\}^n$ by setting it to $1$ above the upper truncation threshold and to $0$ below the lower threshold. Clearly, $f_\text{mon}$ is now also monotone on all of $\{0,1\}^n$.

We sample i.i.d. uniformly random elements of $\{0,1\}^n$, evaluate both $f$ and $f_\text{mon}$ on each these elements and obtain an estimate of $\norm{f-f_\text{mon}}$ up to error $0.005\epsilon$. 
If $f$ is $\epsilon$-far from monotone, then the distance $\norm{f-f_\text{mon}}$ is also at least $\epsilon$, so the estimate will be at least $0.995\epsilon$. If $f$ is $0.49 \epsilon$-close to monotone, then there is some monotone function over $H^n_{0.005\epsilon}$ with which $f$ disagrees on at most $0.49\epsilon \cdot 2^n$ elements of $H^n_{0.005\epsilon}$. The guarantee of Theorem \ref{thm: the local corrector for general posets} (via \Cref{remark: sorting implies local correction}) tells us that then $f$ and $f_\text{mon}$ disagree on at most $0.98\epsilon \cdot 2^n$ elements of $H^n_{0.005\epsilon}$. Since there are only at most $0.005\epsilon \cdot 2^n$ elements in  $\{0,1\}^n$ that are not in $H^n_{0.005\epsilon}$, we see that $\norm{f-f_\text{mon}}$ is at most $0.985\epsilon$. Therefore, the estimate will be at most $0.99\epsilon$. Overall, checking if the estimate is greater than $0.992\epsilon$ allows us to distinguish whether $f$ is $\epsilon$-far from monotone or $0.49\epsilon$-close to monotone.

 For the estimation to succeed, we need to evaluate $f$ and $f_\text{mon}$ on $O\left(\frac{1}{\epsilon^2}\right)$ i.i.d. random elements of $\{0,1\}^n$. For the LCA of $f_\text{mon}$ we can set the overall success probability parameter to be $0.1$. A Chernoff bound and union bound argument then shows that overall success probability is at least $2/3$.
 \blue{For $H^n_{0.005\epsilon}$ our parameters are $h=O\left(\sqrt{n\log\frac{1}{\epsilon}}\right)$, $N=O(2^n)$ and each element of the poset has at most $n$ immediate predecessors or successors. Overall, the run-time given by \Cref{remark: run-time refinement} is 
 $
  n^{O\left(\sqrt{n\log\frac{1}{\epsilon}}\right)}.
 $}
 
\end{proof}

\begin{corollary}[Tolerant testing for general posets]
\label{cor:poset tester}
Suppose we are
in the setting of Theorem \ref{thm: the local corrector for general posets}, and we also have access to an oracle giving us i.i.d. uniform elements in $P$. Then, there is an algorithm that uses\\ $\Delta^{O(\log h \log\log \Delta)}\left(\log\left(N\right)\right)^{O(\log h)}\frac{1}{\epsilon^2}$ queries and run-time, and distinguishes whether the function $f$ is $0.49\epsilon$-close or $\epsilon$-far from monotone.
The failure probability is at most $2/3$.
\end{corollary}
\begin{proof}
The proof is similar to the proof of \Cref{cor:cube tester}, and is given in \Cref{sec: proof of cor:poset tester}. 
\end{proof}

\begin{theorem}[Proper learnability of monotone functions]\label{thm: proper learning realizable}
There is an algorithm that, for any monotone function $g:\{0,1\}^n\rightarrow \{0,1\}$, given i.i.d. example-label pairs $(x_i,g(x_i))$, with $x_i$ uniform in $\{0,1\}^n$, returns a circuit computing a \textbf{monotone function} $\hat{g}$, such that $\norm{g-\hat{g}}\leq \epsilon$.
The algorithm uses \blue{$n^{O\left(\frac{\sqrt{n}}{\epsilon}\right)}$} samples and run-time and fails with probability at most $1/2^n$.
\end{theorem}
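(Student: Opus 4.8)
The plan is to derive this by composing the improper learner of \Cref{thm: bshouty-tamon realizable} with the local corrector of \Cref{thm: the local corrector for general posets} over a truncated hypercube, exactly the way \Cref{cor:cube tester} is derived but tracking accuracy rather than a testing threshold. First I would run the algorithm of \Cref{thm: bshouty-tamon realizable} with accuracy $\epsilon/3$ and failure probability $1/2^{n+1}$, obtaining in $2^{\tilde{O}(\sqrt{n}/\epsilon)}$ samples and time a circuit computing a predictor $f:\{0,1\}^n\to\{0,1\}$ with $\norm{f-g}\le\epsilon/3$. I then take $P:=H^n_{\epsilon/3}$ as the poset. Its Hasse diagram is navigable in $\poly(n)$ time: the immediate predecessors (resp.\ successors) of $x$ in $P$ are exactly the single-bit flips of $x$ that stay inside the Hamming band (any Hamming-distance-$\ge 2$ comparable pair in $P$ has an intermediate point of the same band, hence is not an immediate pair), so I can supply the graph oracle required by \Cref{thm: the local corrector for general posets} by a $\poly(n)$-size circuit and the function oracle by the circuit for $f$. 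This yields query access to an $f_{\mathrm{mon}}:P\to\{0,1\}$ that is monotone on $P$ and arises from $f|_P$ by a sequence of swaps of monotonicity-violating pairs. I set $\hat g:=f_{\mathrm{mon}}$ on $P$, and $\hat g\equiv 1$ above the upper truncation threshold and $\hat g\equiv 0$ below the lower one; since ``above the band'' is upward-closed, ``below the band'' is downward-closed, and $f_{\mathrm{mon}}$ is Boolean, a short case check over comparable pairs shows $\hat g$ is monotone on all of $\{0,1\}^n$.

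Three things then need checking. For correctness, I would invoke \Cref{remark: sorting implies local correction} with the monotone comparator $q:=g|_P$ (monotone on $P$ since $g$ is monotone on the cube): it gives $\norm{f_{\mathrm{mon}}-g}\le\norm{f-g}$ as functions on $P$, so $f_{\mathrm{mon}}$ differs from $g$ on no more elements of $P$ than $f$ does, i.e.\ on at most $\abs{\{x\in\{0,1\}^n: f(x)\neq g(x)\}}\le(\epsilon/3)2^n$ elements; adding the at most $(\epsilon/3)2^n$ points outside $P$ gives $\norm{\hat g-g}\le 2\epsilon/3\le\epsilon$. For the running time, $P=H^n_{\epsilon/3}$ has height $h=\Theta(\sqrt{n\log(1/\epsilon)})$, and every $x\in P$ has at most $\sum_{i\le h}\binom{n}{i}=n^{O(h)}=:\Delta$ predecessors inside $P$ (a predecessor of $x$ inside $P$ differs from $x$ in at most $h$ coordinates, as both weights lie in a band of width $h$) and, symmetrically, at most $\Delta$ successors; with $N=O(2^n)$ and $\delta=1/2^{n+1}$, \Cref{thm: the local corrector for general posets} gives corrector query complexity and run-time $\Delta^{O(\log h\log\log\Delta)}(\log(N/\delta))^{O(\log h)}=2^{\tilde{O}(\sqrt{n\log(1/\epsilon)})}\le 2^{\tilde{O}(\sqrt{n}/\epsilon)}$, and bit-string length also $2^{\tilde{O}(\sqrt{n}/\epsilon)}$. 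For the output format, since the corrector is memoryless I sample and hard-code its random bit-string once; then $x\mapsto\hat g(x)$ is computed by a fixed algorithm of run-time $2^{\tilde{O}(\sqrt{n}/\epsilon)}$ that makes subroutine calls to the $\poly(n)$ graph circuit and to the circuit for $f$ (of size $\le 2^{\tilde{O}(\sqrt{n}/\epsilon)}$), and unrolling this into a circuit costs only a polynomial blow-up plus one copy of the $f$-circuit per oracle call, which stays $2^{\tilde{O}(\sqrt{n}/\epsilon)}$; so the learner outputs a size-$2^{\tilde{O}(\sqrt{n}/\epsilon)}$ circuit for $\hat g$. A union bound over the two failure events ($f$ not $(\epsilon/3)$-accurate; the corrector failing) yields overall failure $\le 1/2^n$.

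Since all the substantive work is inside \Cref{thm: the local corrector for general posets}, I expect the only delicate point here to be the parameter bookkeeping forced by truncation: one must simultaneously verify that truncating to $H^n_{\epsilon/3}$ keeps $\Delta$ at $n^{O(h)}$ with $h=\Theta(\sqrt{n\log(1/\epsilon)})$ — so that the corrector is no slower than the improper learner, using $\sqrt{\log(1/\epsilon)}\le 1/\epsilon$ to absorb $2^{\tilde{O}(\sqrt{n\log(1/\epsilon)})}$ into $2^{\tilde{O}(\sqrt{n}/\epsilon)}$ — and that restricting the corrector's distance guarantee to $P$, together with the crude ``everything outside $P$ may be misclassified'' bound, still leaves total error at most $\epsilon$. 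Choosing both the truncation level and the Bshouty--Tamon accuracy to be $\Theta(\epsilon)$ is exactly what makes the two constraints compatible.
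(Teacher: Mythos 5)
Your proposal is correct and follows essentially the same approach as the paper: run the Bshouty--Tamon improper learner at accuracy $\Theta(\epsilon)$, locally correct on the truncated hypercube via \Cref{thm: the local corrector for general posets}, extend monotonically off the band, and hard-code the LCA's random seed into the output circuit, with the same parameter bookkeeping ($h=\Theta(\sqrt{n\log(1/\epsilon)})$, $\Delta=n^{O(h)}$). The only cosmetic difference is in the error accounting: you apply the ``extra property'' of \Cref{remark: sorting implies local correction} directly with $q=g$ to get $\norm{f_{\mathrm{mon}}-g}\le\norm{f-g}$ on $P$, whereas the paper bounds $\norm{f-f_{\mathrm{mon}}}$ by the doubling consequence and then uses the triangle inequality --- your version is a touch cleaner and gives the same bound.
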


\begin{proof}
The proper learner does the following:
\begin{enumerate}
    \item Use the improper learner in Theorem \ref{thm: bshouty-tamon realizable} with error parameter $\frac{\epsilon}{10}$ and failure probability bound $1/2^{n+1}$. This gives a circuit computing a function $f$ over $\{0,1\}^n$.
    \item Obtain a circuit computing a function $f_\text{mon}:H^n_{\epsilon/10}\rightarrow\{0,1\}$ as follows. The circuit is computed via running the LCA from Theorem \ref{thm: the local corrector for general posets} with accuracy parameter equal to $\frac{\epsilon}{10}$ and failure probability bound equal to $1/2^{n+1}$, and with the oracle calls to a function replaced with an evaluation of the circuit $f$, restricted to $H^n_{\epsilon/10}$. The random bit-string used by the LCA is hard-coded into the circuit for $h$, so that the resulting circuit is deterministic.
    \item Augment the circuit computing $f_\text{mon}$ in order to extend this function into the whole of $\{0,1\}^n$ as follows. If $|x|>\frac{n}{2}+\sqrt{\frac{n}{2}\log \frac{20}{\epsilon}}$ then $f_\text{mon}(x)=1$, and if $|x|<\frac{n}{2}-\sqrt{\frac{n}{2}\log \frac{20}{\epsilon}}$ then $f_\text{mon}(x)=0$.
    \item Output the circuit computing $f_\text{mon}$.
\end{enumerate}
With probability at least $1/2^{n}$ both of the algorithms we invoke succeed, which we will assume henceforth. 

\blue{
Let us discuss the run-time. Step 1 runs in time $n^{O\left(\frac{\sqrt{n}}{\epsilon}\right)}$ and the circuit for $g$ can therefore only have size at most $n^{O\left(\frac{\sqrt{n}}{\epsilon}\right)}$. For step 2, first observe that we have $\left\lvert H^n_{\epsilon/10} \right\rvert\leq 2^n$, $H^n_{\epsilon/10}$ has height $2\sqrt{\frac{n}{2}\log \frac{20}{\epsilon}}$ and each element in $H^n_{\epsilon/10}$ can have only at most $n$ immediate predecessors and successors. Therefore, the LCA from Theorem \ref{thm: the local corrector for general posets} (refined via \Cref{remark: run-time refinement}) in this setting has run-time, query complexity and bit-string length of $n^{O(\sqrt{n}\log(1/\epsilon))}$. Since the circuit for $f$ itself has size $n^{O\left(\frac{\sqrt{n}}{\epsilon}\right)}$, the overall run-time of the learning algorithm and the size of circuit computing $f_\text{mon}$ is also $2^{O\left(\frac{\sqrt{n}}{\epsilon}\right)}$.}

Finally, we argue correctness. Correctness of the LCA in Theorem \ref{thm: the local corrector for general posets} implies that $f_\text{mon}$ is monotone over $H^n_{\epsilon/10}$ and we see that the extension of this function to $\{0,1\}^n$ in step $3$ keeps it monotone. 

Now, let $g$ be the function we are trying to learn. Since we are in the realizable setting, $g$ is monotone. 
Theorem \ref{thm: bshouty-tamon realizable} tells us that $f$ and $g$ disagree on at most $\frac{\epsilon}{10}2^n$ elements. This, together with Theorem \ref{thm: the local corrector for general posets}, Remark \ref{remark: sorting implies local correction} and the fact that $g$ is monotone, tells us that $f_\text{mon}$ disagrees with $f$ on at most $\frac{\epsilon}{5}2^n$ elements of $H^n_{\epsilon/10}$. The number of $x\in \{0,1\}^n$ not in $H^n_{\epsilon/10}$ is at most $\frac{\epsilon}{10}2^n$, so overall $f_\text{mon}$ disagrees with $f$ on at most $\frac{3\epsilon}{10}2^n$ elements of $\{0,1\}^n$, in other words $\norm{f-f_\text{mon}} \leq \frac{3\epsilon}{10}$. Via triangle inequality, we have $\norm{g-f_\text{mon}} \leq \norm{g-f} +\norm{f-f_\text{mon}} \leq \frac{2\epsilon}{5}\leq \epsilon$, finishing the proof.
\end{proof}
Let us remark on the performance of our algorithm in the agnostic setting.
Observation 3 on page 5 of \cite{kalai_agnostically_2005} implies that the algorithm of \cite{bshouty1996fourier} (i.e. Theorem \ref{thm: bshouty-tamon realizable}), when run in the agnostic setting, will give a predictor with error at most $8\cdot \text{opt}+\epsilon$, where $\text{opt}$ is the error of best monotone predictor. Repeating the argument in the proof of Theorem \ref{thm: proper learning realizable} then tells us that in this setting our proper learning algorithm will also have prediction error $C\cdot \text{opt}+\epsilon$ for some absolute constant $C$. In particular, this means\footnote{Let us elaborate. If we take the error parameter in our algorithm to be $\epsilon/2$, then we see that our algorithm will have prediction error $C \cdot \text{opt}+\epsilon/2$. Then, if noise rate $\text{opt}$ is $\frac{\epsilon}{2C}$ or less, our predictor will be $\epsilon$-competitive with the best monotone function, as required by the agnostic learning model of Kearns, Schapire, and Sellie \cite{kearns1994toward}. } in the agnostic learning model of Kearns, Schapire, and Sellie \cite{kearns1994toward}, our algorithm can handle a noise rate of $\Omega(\epsilon)$. 

We now present how to obtain a better error guarantee in the agnostic setting at a cost of slightly worse dependence of run-time on $\epsilon$:      
\begin{theorem}[Proper learning in agnostic setting]\label{thm: proper learning semi-agnostic}

In the agnostic setting with examples distributed uniformly over $\{0,1\}^n$, 
there is a learning algorithm that outputs a circuit computing a \textbf{monotone function} $\hat{g}$, such that if the best monotone predictor has error $\opt$, then the error of the predictor $\hat{g}$ is at most $3 \cdot \opt +\epsilon$. 
The algorithm uses \blue{$n^{O\left(\frac{\sqrt{n}}{\epsilon} \log\frac{1}{\epsilon}\right)}$} samples and run-time. The failure probability of the algorithm is at most $1/2^n$.
\end{theorem}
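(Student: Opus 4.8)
The plan is to mirror the proof of Theorem~\ref{thm: proper learning realizable}, replacing the realizable improper learner by the agnostic one of Theorem~\ref{thm: bshouty-tamon agnostic}. First I would run that improper agnostic learner with error parameter $\epsilon/10$ and failure probability $1/2^{n+1}$, obtaining a circuit computing some $f:\{0,1\}^n\to\{0,1\}$ of size $2^{\tilde{O}(\sqrt{n}/\epsilon^2)}$ whose generalization error is at most $\opt+\epsilon/10$. Then, exactly as in Steps~2--4 of Theorem~\ref{thm: proper learning realizable}, I would apply the local corrector of Theorem~\ref{thm: the local corrector for general posets} to $f$ restricted to the truncated hypercube $H^n_{\epsilon/10}$ (with accuracy and failure parameters set to $\epsilon/10$ and $1/2^{n+1}$, and the random bit-string hard-coded into the circuit), obtaining a circuit for a monotone $f_\text{mon}$ on $H^n_{\epsilon/10}$, and extend it to all of $\{0,1\}^n$ by outputting $1$ above the upper Hamming-weight threshold and $0$ below the lower one, which keeps $f_\text{mon}$ monotone. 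The running-time analysis is identical to that of Theorem~\ref{thm: proper learning realizable} except that Step~1 now costs $2^{\tilde{O}(\sqrt{n}/\epsilon^2)}$; the LCA invocation contributes only an additional $n^{\tilde{O}(\sqrt{n}\log(1/\epsilon))}$ factor, which is dominated by this cost.

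The key step is the error analysis, which differs from the realizable case because there is no underlying monotone target. Let $g^\star$ be a monotone predictor achieving error $\opt$. I would first record the elementary fact that for any $h_1,h_2:\{0,1\}^n\to\{0,1\}$, whenever $h_1(x)\ne h_2(x)$ at least one of $h_1,h_2$ disagrees with the label $y$, so $\norm{h_1-h_2}=\Pr_x[h_1(x)\ne h_2(x)]\le \mathrm{err}(h_1)+\mathrm{err}(h_2)$; applied to $f$ and $g^\star$ this gives $\norm{f-g^\star}\le 2\opt+\epsilon/10$, i.e.\ $f$ disagrees with $g^\star$ on at most $(2\opt+\epsilon/10)2^n$ points, in particular on at most that many points inside $H^n_{\epsilon/10}$. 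Since the restriction of $g^\star$ to $H^n_{\epsilon/10}$ is monotone, Remark~\ref{remark: sorting implies local correction} yields that $f_\text{mon}$ disagrees with $g^\star$ on no more points of $H^n_{\epsilon/10}$ than $f$ does, hence on at most $(2\opt+\epsilon/10)2^n$ of them. Adding the at most $(\epsilon/10)2^n$ points of $\{0,1\}^n$ outside $H^n_{\epsilon/10}$ gives $\norm{f_\text{mon}-g^\star}\le 2\opt+\epsilon/5$, and a second application of the ``error versus Hamming distance'' inequality gives $\mathrm{err}(f_\text{mon})\le \mathrm{err}(g^\star)+\norm{f_\text{mon}-g^\star}\le 3\opt+\epsilon/5\le 3\opt+\epsilon$.

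Finally, a union bound over the two invoked subroutines shows the whole procedure succeeds with probability at least $1-1/2^n$. I expect the only point worth spelling out carefully to be this double use of the elementary triangle inequality relating generalization error to Hamming distance, combined with the monotone-comparison property of the corrector from Remark~\ref{remark: sorting implies local correction} (it is crucial that the corrector never increases distance to \emph{any} monotone function, not just to the closest one); the truncation bookkeeping and the running-time bound are routine repetitions of the argument in Theorem~\ref{thm: proper learning realizable}.
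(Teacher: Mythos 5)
Your proposal is correct and follows essentially the same route as the paper's own proof: replace the realizable improper learner with the agnostic one of Theorem~\ref{thm: bshouty-tamon agnostic}, bound $\norm{f-f^*}\le 2\opt+\epsilon/10$ via the ``error versus Hamming distance'' inequality, invoke the monotone-comparison property of Remark~\ref{remark: sorting implies local correction} with $q=f^*$, account for the truncation, and conclude by the triangle inequality. No gaps.
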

\begin{proof}
The proof, presented in  \Cref{sec: proof of thm: proper learning semi-agnostic},
follows a pattern similar to the proof of Theorem \ref{thm: proper learning realizable}. 
\end{proof}

\section{The LCA for poset sorting}
\label{sec:sorting}
In this section, we prove \Cref{thm: the local corrector for general posets}. First, we give a ``global'' algorithm for the poset sorting problem, which reads all the values of $f$ and writes all the values of $f_{mon}$. The global algorithm is inefficient, but lends itself easily to a proof of correctness. We prove correctness for the global algorithm, then give our local implementation and show that it simulates the global algorithm.

\subsection{A global view} 
\label{subsection: the global view}
We first present~\Cref{alg:global}, which sorts the labels of $f$ in stages by swapping the labels of pairs of vertices that violate monotonicity. We will show that each stage reduces the maximum distance between violated vertices by a factor of 2, which produces a monotone function after $\log h$ stages, where $h$ is the height of the input graph.

Before we present the algorithm, we define the following objects that it constructs during its execution.

\begin{definition}[Violation set]
We define the set of violated pairs $\mathrm{viol}_P(f)$ as follows:
\begin{align*}
    &\viol_P(f) := \\ &\{(v, w) \in P \times P : ~ v \succ w,~ f(v)=0 \text{ and } f(w)=1\}.
\end{align*}
\end{definition}

\begin{definition}[$k$-violation graph]
\label{def:viol-graph}
For a poset $P$ of height $h$, a function $f~:~P\to \zo$, and some $k \in [h]$, we define the $k$-violation graph $B_k$ as follows:

\begin{itemize}
	\item $V(B_k) = P$, and 
	\item For $(v, w) \in \mathrm{viol}_P(f)$, $(v, w) \in E(B_k)$ iff $\dist(v,w) \ge k$. 
\end{itemize}
Note that $B_k$ is bipartite and undirected.
\end{definition}

\begin{algorithm}[H]
\begin{algorithmic}
\State \textbf{Given:} Poset $P$ of height $h$, function $f_0:~P \to \zo$
\State \textbf{Output:} monotone function over $P$
\State Let $i \gets 0$
\For{$0 \le i \le \lceil \log h \rceil + 1$}
\State Let $k \gets \lceil h/2^{i+1} \rceil$
\State Construct the $k$-violation graph $B_k$ from $P$ and $f_i$.
\State Compute a maximal matching in $B_k$ and let $\lambda$ map matched vertices to each other, and unmatched vertices to themselves. 
\State For all $x \in P$, let $f_{i+1}(x) = f_i(\lambda(x))$
\State $i \gets i + 1$
\EndFor
\State \Return $f_i$
\end{algorithmic}
\caption{LCA for sorting labels in a poset: the global view}
\label{alg:global}
\end{algorithm}

\subsection{Correctness of \Cref{alg:global}}
Recall that we are required to show that our algorithm outputs a monotone function that can be obtained from $f$ by a sequence of monotonicity-violating label swaps.
Since it is evident from the pseudocode that this algorithm only performs such swaps, it remains to show only that the output is monotone.

Our algorithm works by finding a maximal matching over the $k$-violation graph $B_k$, and swapping the matched labels. We first claim that performing this swap reduces the \red{distance (length of the longest path)} between violated labels by at least $k$.


\begin{lemma}[Distance shortening lemma]\label{lemma: path shortening lemma}
Let $P$ be any poset. 
Let $f$ be a $\{0,1\}$-valued function over $P$ and $k$ be a positive integer. Let $B_{k}$ be as defined in \Cref{def:viol-graph}. Suppose one picks some maximal matching $M$ over $B_{k}$ and obtains a new function $f'$ as follows
$$
f'(x)=
\begin{cases}
f(x) &\text{if $x$ was not matched},\\
f(y) &\text{if $x$ was matched to some $y$}.
\end{cases}
$$
Then, we have
\begin{align*}
&\max_{(\bv, \bw) \in \viol_P(f')}\dist(v , w) \leq \\
&\max\left(k-1, \left(\max_{(\bv, \bw) \in \viol_P(f)}\dist(v , w)\right) - k \right).
\end{align*}
\end{lemma}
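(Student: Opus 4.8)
The plan is to bound $\dist(a,b)$ for each violated pair $(a,b)$ of $f'$ separately and then take the maximum. Fix any $(a,b)\in\viol_P(f')$, so $a\succ b$, $f'(a)=0$ and $f'(b)=1$; if $\dist(a,b)\le k-1$ this pair already obeys the claimed bound, so assume henceforth that $\dist(a,b)\ge k$, and I will show $\dist(a,b)\le\bigl(\max_{(v,w)\in\viol_P(f)}\dist(v,w)\bigr)-k$ (when $\viol_P(f')=\emptyset$ there is nothing to prove). The one structural fact I would record first is that $\dist$ is \emph{superadditive along chains}: if $x\succeq y\succeq z$ then $\dist(x,z)\ge\dist(x,y)+\dist(y,z)$, since concatenating a longest directed path $x\to y$ with one $y\to z$ is a directed path $x\to z$ (and $\dist(x,x)=0$).

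Now, for $x\in P$ let $x^{\star}$ denote the $M$-partner of $x$ if $x$ is matched and $x^{\star}=x$ otherwise, so that $f'(x)=f(x^{\star})$ for every $x$. The heart of the argument is to show that $(a^{\star},b^{\star})\in\viol_P(f)$ and $\dist(a^{\star},b^{\star})\ge\dist(a,b)+k$. If $a$ is unmatched, then $a^{\star}=a$ and $f(a)=f'(a)=0$. If $a$ is matched, then $\{a,a^{\star}\}\in E(B_k)$, so one endpoint of this edge is the upper (successor) vertex, carrying value $0$, and the other is the lower (predecessor) vertex, carrying value $1$; since $f(a^{\star})=f'(a)=0$, it is $a^{\star}$ that is the upper vertex, hence $a^{\star}\succ a$ and $\dist(a^{\star},a)\ge k$. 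Symmetrically, either $b^{\star}=b$ with $f(b)=1$, or $b\succ b^{\star}$ with $f(b^{\star})=1$ and $\dist(b,b^{\star})\ge k$. In all cases $a^{\star}\succeq a\succ b\succeq b^{\star}$ with $f(a^{\star})=0$ and $f(b^{\star})=1$, so $a^{\star}\succ b^{\star}$ and $(a^{\star},b^{\star})\in\viol_P(f)$. Moreover $a$ and $b$ cannot both be unmatched, for otherwise $(a,b)\in\viol_P(f)$ and, since $\dist(a,b)\ge k$, $\{a,b\}\in E(B_k)$, and two unmatched adjacent vertices would contradict the maximality of $M$. Hence at least one of $\dist(a^{\star},a)$ and $\dist(b,b^{\star})$ is $\ge k$ while the other is $\ge 0$, so applying superadditivity twice (along $a^{\star}\succeq a\succeq b^{\star}$ and along $a\succeq b\succeq b^{\star}$) yields
$$\dist(a^{\star},b^{\star})\ \ge\ \dist(a^{\star},a)+\dist(a,b)+\dist(b,b^{\star})\ \ge\ \dist(a,b)+k.$$
Therefore $\max_{(v,w)\in\viol_P(f)}\dist(v,w)\ge\dist(a^{\star},b^{\star})\ge\dist(a,b)+k$, which is exactly the desired bound for the pair $(a,b)$; taking the maximum over all $(a,b)\in\viol_P(f')$ finishes the proof.

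The step I expect to require the most care is the orientation bookkeeping in the matched cases --- reading off from $f'(a)=0$ that the partner $a^{\star}$ lies \emph{above} $a$ in $P$, and dually that $b^{\star}$ lies below $b$ --- since this is precisely what makes $a^{\star}\succeq a\succ b\succeq b^{\star}$ an honest chain and lets the two gaps of length at least $k$ supplied by the edge condition of $B_k$ accumulate through superadditivity. The other two ingredients, namely the maximality of $M$ ruling out the all-unmatched case and the superadditivity of $\dist$, are routine.
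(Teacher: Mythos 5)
Your proof is correct and follows essentially the same route as the paper's: both pull a violated pair of $f'$ back along the matching to a violated pair of $f$, use maximality of $M$ to ensure at least one endpoint is matched (hence contributes a gap of length at least $k$) once $\dist(a,b)\ge k$, and accumulate the gaps via superadditivity of $\dist$ along the chain $a^\star\succeq a\succ b\succeq b^\star$. The only difference is organizational---you dispose of the $\dist(a,b)\le k-1$ case up front and state the net inequality $\dist(a^\star,b^\star)\ge\dist(a,b)+k$ directly, whereas the paper reaches the same bound through an exhaustive eight-case enumeration.
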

\begin{proof}
Let $\lambda: P \rightarrow P$ map $x$ to (i) $y$ if $x$ was mapped to some $y$ by $M$ (ii) $x$ itself otherwise. Note that $f'(x)=f(\lambda(x))$ and also $\lambda$ is one-to-one. 

Let $x,y$ in $P$ be such that $x\succ y$, $f(x)=0$ and $f(y)=1$. If $\dist(x,y) \geq k$, it cannot be the case that both $\lambda(x)=x$ and $\lambda(y)=y$, because then $M$ would not be a maximal matching since we could also match $x$ to $y$. Besides, note that if $\lambda(x) \neq x$ then $\lambda(x) \prec x$ and  $\dist(x,\lambda(x))\ge k$. Analogously, if $\lambda(y) \neq y$ then $\lambda(y) \succ y$ and  $\dist(y,\lambda(y)) \ge k$. Additionally, for any $a,b,c\in P$ if $a \succ b \succ c$ then $\dist(a,c)\ge \dist(a,b)+\dist(b,c)$, as there exists a path from $a$ to $c$ that is the union of the longest paths from $a$ to $b$ and $b$ to $c$.
Taking these observations together, we see that only following eight cases are possible:
\begin{enumerate}
    \item $\dist(x,y) \leq k-1$.
    \begin{enumerate}
        \item $\lambda(x) \succ \lambda(y)$ and \\ 
        $\dist(\lambda(x),\lambda(y)) \leq \dist(x,y) \leq k-1$.
        \item It is not the case that $\lambda(x) \succ \lambda(y)$.
    \end{enumerate}
    \item $\dist(x,y) \geq k$.
    \begin{enumerate}
    \item $\lambda(x)=x$ and $\dist(y,\lambda(y)\ge k$
    \begin{enumerate}
        \item $\lambda(x) \succ \lambda(y)$ and \\
        $\dist(\lambda(x),\lambda(y))\le \dist(x,y)-k$.
        \item It is not the case that $\lambda(x) \succ \lambda(y)$.
    \end{enumerate}
    \item $\lambda(y)=y$ and $\dist(x,\lambda(x))\ge k$
    \begin{enumerate}
        \item $\lambda(x) \succ \lambda(y)$ and \\
        $\dist(\lambda(x),\lambda(y))\le \dist(x,y)-k$.
        \item It is not the case that $\lambda(x) \succ \lambda(y)$.
    \end{enumerate}
    \item $\dist(x,\lambda(x))\ge k$ and $\dist(y,\lambda(y))\ge k$
    \begin{enumerate}
        \item $\lambda(x) \succ \lambda(y)$ and \\
        $\dist(\lambda(x),\lambda(y))\le \dist(x,y)-2k$.
        \item It is not the case that $\lambda(x) \succ \lambda(y)$.
    \end{enumerate}
    \end{enumerate}
\end{enumerate}
In the whole, if $\lambda(x) \succ \lambda(y)$ then $\dist(\lambda(x),\lambda(y))\leq \max(k-1,\dist(x,y)-k)$.

Now let's consider what happens after the swap. Let $v_0,w_0$ in $P$ maximize $\dist(v_0,w_0)$ subject to $v_0\succ w_0$, $f'(v_0)=0$ and $f'(w_0)=1$. Let $x=\lambda^{-1}(v_0)$ and $y=\lambda^{-1}(w_0)$.
Since $f'(v_0)=0$, we have $x \succeq v_0$ and since $f'(w_0)=1$, we have $y \preceq w_0$. Therefore, $x \succ y$. Also, $f(x)=f'(v_0)=0$, $f(y)=f'(w_0)=1$ and $\lambda(x)\succ\lambda(y)$. The conclusion of the previous paragraph tells us that $\dist(v_0,w_0)= \dist(\lambda(x),\lambda(y)) \leq \max(k-1,\dist(x,y)-k)$. Overall,
\begin{align*}
    &\max_{(v , w)\in\viol_P(f')}\dist(v , w) =  \dist(v_0,w_0) \\
    &\leq \max(k-1,\dist(x,y)-k) \\
    &\leq\max\left(k-1, \left(\max_{(v , w)\in\viol_P(f)}\dist(v , w)\right) - k \right),
\end{align*}
which finishes the proof.
\end{proof}

The following invariant, which will be useful for proving that the output is monotone, is a consequence of \Cref{lemma: path shortening lemma}.

\begin{corollary}[Distance shortening invariant]
\label{claim:distance-invariant}
The following holds for all $f_i$, $0 \le i \le \lceil \log h \rceil$:
$$
\max_{(v, w) \in \viol_P(f)} \dist(v, w) \leq \bigg \lceil \frac{h}{2^i} \bigg \rceil
$$
\end{corollary}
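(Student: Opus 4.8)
The plan is to prove \Cref{claim:distance-invariant} by induction on $i$, using \Cref{lemma: path shortening lemma} as the engine for the inductive step. The statement to establish is that for each $i$ with $0 \le i \le \lceil \log h\rceil$, the function $f_i$ constructed by \Cref{alg:global} satisfies $\max_{(v,w)\in\viol_P(f_i)}\dist(v,w) \le \lceil h/2^i\rceil$. Throughout, write $D_i := \max_{(v,w)\in\viol_P(f_i)}\dist(v,w)$, with the convention that $D_i = 0$ (or $-\infty$, whichever is cleaner) if $\viol_P(f_i)$ is empty; one should note that $f_i$ is monotone exactly when $\viol_P(f_i) = \emptyset$, so once $D_i \le 0$ we are done.

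For the base case $i = 0$: since $P$ has height $h$, any two comparable elements $v \succ w$ satisfy $\dist(v,w) \le h$, so trivially $D_0 \le h = \lceil h/2^0\rceil$. For the inductive step, suppose $D_i \le \lceil h/2^i\rceil$. In iteration $i$ the algorithm sets $k = \lceil h/2^{i+1}\rceil$, builds $B_k$ from $P$ and $f_i$ (which is exactly the $k$-violation graph of \Cref{def:viol-graph}), picks a maximal matching, and defines $f_{i+1}$ by the swap rule — precisely the setup of \Cref{lemma: path shortening lemma} with $f = f_i$, $f' = f_{i+1}$. The lemma then gives
\[
D_{i+1} \;\le\; \max\!\left(k-1,\; D_i - k\right)
\;\le\; \max\!\left(\Big\lceil \tfrac{h}{2^{i+1}}\Big\rceil - 1,\; \Big\lceil \tfrac{h}{2^{i}}\Big\rceil - \Big\lceil \tfrac{h}{2^{i+1}}\Big\rceil\right).
\]
It remains to check that both terms inside this $\max$ are at most $\lceil h/2^{i+1}\rceil$. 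The first term, $\lceil h/2^{i+1}\rceil - 1$, is obviously at most $\lceil h/2^{i+1}\rceil$. The second term requires the elementary ceiling inequality $\lceil a \rceil - \lceil a/2 \rceil \le \lceil a/2 \rceil$ applied with $a = h/2^i$, i.e. $\lceil h/2^i\rceil \le 2\lceil h/2^{i+1}\rceil$; this follows since $\lceil a\rceil \le 2\lceil a/2\rceil$ holds for all real $a \ge 0$ (write $a/2 = \lceil a/2\rceil - \theta$ with $0 \le \theta < 1$, so $a = 2\lceil a/2\rceil - 2\theta$ and $\lceil a\rceil \le 2\lceil a/2\rceil$). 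Combining, $D_{i+1} \le \lceil h/2^{i+1}\rceil$, completing the induction.

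The main obstacle is not really in this corollary itself — it is a short induction once \Cref{lemma: path shortening lemma} is in hand — but rather in making sure the arithmetic with the ceilings is airtight, since the factor-of-two halving only works cleanly at the level of $\lceil h/2^i\rceil$ rather than $h/2^i$, and an off-by-one in the relation between $k = \lceil h/2^{i+1}\rceil$ and $\lceil h/2^i\rceil$ would break the bound. I would therefore isolate the claim $\lceil h/2^i\rceil \le 2\lceil h/2^{i+1}\rceil$ as an explicit sub-step. One should also remark on why the loop runs up to $\lceil \log h\rceil + 1$ rather than stopping at $\lceil\log h\rceil$: at $i = \lceil\log h\rceil$ we get $\lceil h/2^i\rceil = 1$, so $D_{\lceil\log h\rceil} \le 1$, meaning any remaining violating pair $(v,w)$ has $\dist(v,w) = 1$, i.e. $v$ is an immediate successor of $w$; one more phase with $k = 1$ (the bipartite maximal-matching / $h=1$ case) then drives $D$ to $0$ by the $k-1 = 0$ branch of the lemma, yielding a monotone output. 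This final observation is what connects the corollary to the overall correctness claim of \Cref{alg:global}, and I would state it immediately after the corollary as the conclusion of the correctness argument.
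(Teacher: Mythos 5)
Your proof is correct and mirrors the paper's argument: induction on $i$ with base case $\max_{(v,w)\in\viol_P(f_0)}\dist(v,w)\le h$, then applying \Cref{lemma: path shortening lemma} with $k=\lceil h/2^{i+1}\rceil$ for the inductive step, followed by the ceiling bound $\lceil h/2^i\rceil \le 2\lceil h/2^{i+1}\rceil$. The paper leaves that ceiling arithmetic implicit, so your explicit sub-step (and the closing remark tying $i=\lceil\log h\rceil+1$ to monotonicity, which the paper puts in \Cref{lem:correctness}) is a welcome clarification rather than a departure.
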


\begin{proof}
We proceed by induction on $i$. For $i=0$, the distance must be at most $h$, because $h$ is the height of $P$. Assume as an inductive hypothesis that the claim holds for some $i \le \lceil \log h \rceil$.

$B_k$ is a graph with the properties described in~\Cref{def:viol-graph}: it has an edge joining each pair of vertices in $P$ that violates monotonicity and has distance at least $k$, for $k = \lceil h / 2^{i+1} \rceil$. By the inductive hypothesis, all such distances are between $\lceil h/2^{i+1}\rceil $ and $\lceil h/2^i \rceil$ inclusive. Then by~\Cref{lemma: path shortening lemma} we guarantee that 
\begin{align*}
&\max_{(v, w) \in \viol_P(f)} \dist(v, w) \\
&\leq \max \bigg (\bigg\lceil \frac{h}{2^{i+1}} \bigg\rceil - 1,  ~ \bigg\lceil \frac{h}{2^i} \bigg\rceil - \bigg\lceil \frac{h}{2^{i+1}}\bigg\rceil \bigg ) \\
&\le \bigg\lceil \frac{h}{2^{i+1}}\bigg\rceil .
\end{align*}
This completes the induction.
\end{proof}
With~\Cref{claim:distance-invariant} in hand, we continue with the proof of correctness.

\begin{lemma}[Correctness]
		\label{lem:correctness}
For any Boolean function $f_0$, poset $P$, and any choice of maximal matchings over $B_k$ in \Cref{alg:global}, the output $f_{\lceil \log h \rceil + 1}$ is monotone over $P$.
\end{lemma}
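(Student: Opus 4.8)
The plan is to show that after all $\lceil \log h \rceil + 1$ stages, there are no monotonicity violations left, i.e.\ $\viol_P(f_{\lceil \log h \rceil + 1}) = \emptyset$, which is equivalent to monotonicity. First I would invoke \Cref{claim:distance-invariant}, noting that at the end of stage $i = \lceil \log h \rceil$ we have
\[
\max_{(v,w) \in \viol_P(f_{\lceil \log h \rceil})} \dist(v,w) \le \left\lceil \frac{h}{2^{\lceil \log h \rceil}} \right\rceil = 1,
\]
since $2^{\lceil \log h \rceil} \ge h$. So after stage $\lceil \log h \rceil$ the only possible violations are between vertices at graph-distance exactly $1$ (immediate predecessor/successor pairs).

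Next I would examine what the final iteration, stage $i = \lceil \log h \rceil$ itself — or, to be safe about indexing, one more stage — does. In stage $i = \lceil \log h \rceil$ we set $k = \lceil h / 2^{i+1} \rceil = 1$ (since $h/2^{i+1} \le 1/2$, its ceiling is $1$, assuming $h \ge 1$). With $k = 1$, the graph $B_1$ contains an edge for \emph{every} violated pair $(v,w) \in \viol_P(f_i)$, because the distance condition $\dist(v,w) \ge 1$ is automatically satisfied for any comparable pair with $v \succ w$. Now apply \Cref{lemma: path shortening lemma} with this $k = 1$: it gives
\[
\max_{(v,w) \in \viol_P(f_{i+1})} \dist(v,w) \le \max\left(k - 1, \left(\max_{(v,w)\in\viol_P(f_i)}\dist(v,w)\right) - k\right) = \max(0, D - 1),
\]
where $D = \max_{(v,w)\in\viol_P(f_i)}\dist(v,w) \le 1$ by the invariant. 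Hence the right-hand side is $0$. But $\dist$ is a positive integer whenever it is defined (any directed path between distinct comparable vertices has length $\ge 1$), so a maximum value of $0$ can only mean that $\viol_P(f_{i+1})$ is empty. Therefore $f_{\lceil \log h \rceil + 1}$ has no violated pairs and is monotone.

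The main subtlety — and the thing I would be most careful about — is the interaction between the \texttt{for}-loop bounds in \Cref{alg:global} (which runs $i$ from $0$ to $\lceil \log h \rceil + 1$, producing $f_{\lceil \log h \rceil + 2}$) and the statement, which claims $f_{\lceil \log h \rceil + 1}$ is monotone; I would reconcile this by observing the argument above already shows $f_{\lceil \log h \rceil + 1}$ is monotone (using the invariant at $i = \lceil \log h \rceil$ together with one more matching step at $k=1$), and that any subsequent stages leave a monotone function unchanged since $\viol_P$ is then empty and the matching is trivial. A secondary point to handle cleanly is the degenerate case $h = 0$ (an antichain), where there are no comparable pairs at all and monotonicity is immediate, and the case $h = 1$, where $\lceil \log h \rceil = 0$ and stage $0$ already uses $k = 1$; the same reasoning covers both. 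I expect the distance-shortening lemma and the invariant to do essentially all the work, so the only real obstacle is presenting the endgame ($k = 1$ kills all remaining distance-$1$ violations) precisely and matching it to the indexing in the theorem statement.
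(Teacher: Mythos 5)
Your proof is correct and follows essentially the same route as the paper's: invoke the distance-shortening invariant at $i = \lceil \log h \rceil$ to get a maximum violation distance of $1$, then apply \Cref{lemma: path shortening lemma} one more time (with $k = 1$) to drive it to $0$. Your observation about the \texttt{for}-loop bound producing $f_{\lceil \log h \rceil + 2}$ while the lemma claims $f_{\lceil \log h \rceil + 1}$ is a genuine (if harmless) indexing wrinkle the paper does not address, and your note that subsequent matchings on a monotone function are vacuous is the right way to dispose of it.
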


\begin{proof}
By~\Cref{claim:distance-invariant}, we have 
$$
\max_{(v, w) \in \viol_P(f_{\lceil \log h \rceil})} \dist(v, w) \leq \bigg \lceil \frac{h}{2^{\lceil \log h \rceil }} \bigg \rceil = 1
$$
Then $f_{\lceil \log h \rceil }$ has the property that all pairs that violate monotonicity are immediate neighbors in $P$. By one more application of \Cref{lemma: path shortening lemma}, we have 
\begin{align*}
&\max_{(v, w) \in \viol_P(f_{\lceil \log h \rceil + 1})} \dist(v, w) \\
&\leq \max\bigg (1 - 1, 1 - \bigg \lceil \frac{h}{2^{\lceil \log h \rceil + 1}} \bigg \rceil\bigg ) = 0,
\end{align*}
indicating that $f_{\lceil \log h \rceil + 1  }$ is monotone.
\end{proof}

\subsection{Local implementation}
\label{subsection: local implementation non-refined}
In this section, we provide an LCA that gives membership query access to the output of~\Cref{alg:global}, and we analyze its complexity. To better explain how our LCA simulates~\Cref{alg:global}, we present it as a system of three LCAs, each parameterized by the iteration number $i$. \Cref{alg:path graphs} makes queries to the $i_{th}$ function $f_i$ and an all-neighbors oracle for $P$, and answers queries to the $i_{th}$ $k$-violation graph $B_i$. \Cref{alg:matchings} makes queries to $B_i$ and answers queries to a maximal matching $\lambda_i$ over it. \Cref{alg:functions} makes queries to $f_i$ and $\lambda_i$ and answers queries to $f_{i+1}$, which swaps the matched labels.


\begin{algorithm}[H]
\begin{algorithmic}
		\State Given: Target vertex $x$, \red{all-neighbors (immediate predecessor and successor)} oracle for $P$, membership query oracle $f_i$, height $h$, iteration number $i$. 
		\State Initialize $k \gets \lceil h / 2^{i+1} \rceil$
		\If {$f_i(x) = 1$}
		\State $S \gets$ the set of all successors of $x$ 
		\Else
		\State $S \gets$ the set of all predecessors of $x$ 
		\EndIf
		\State Compute longest path $\dist(x, y)$ for each $y \in S$ by dynamic programming
		\State Remove any $y$ from $S$ such that $\dist(x, y) < k$ or $f(y) = f(x)$
		\State \Return $S$
		
\end{algorithmic}
\caption{An LCA for undirected all-neighbors queries $B_i(x, P, f_i, h, i)$ }
\label{alg:path graphs}
\end{algorithm}

\begin{algorithm}[H]
\begin{algorithmic}
		\State Given: Target vertex $x$, undirected all-neighbors query oracle $B_i$, random seed $r$, and confidence parameter $\delta$.
		\State Call the algorithm described in \Cref{cor:matchings} with $B_i$ as the graph, $x$ as the target, and $r$ as the random seed \blue{ and $\Delta$ as the degree bound}. If $x$ is in the matching, return the vertex that it is matched to; otherwise return $x$.
\end{algorithmic}
\caption{An LCA for maximal matchings $\lambda_i(x, B_i, r, \delta)$}
\label{alg:matchings}
\end{algorithm}

\begin{algorithm}[H]
\begin{algorithmic}
		\State Given: Target vertex $x$, matching query oracle $\lambda_i$, membership query oracle $f_{i}$
		\State \Return $f_{i}(\lambda_i(x))$
\end{algorithmic}
\caption{An LCA for membership queries $f_{i+1}(x, \lambda_i, f_i)$}
\label{alg:functions}
\end{algorithm}

\subsubsection{Analysis of our implementation}
\label{subsubsection: analysis of our implementation}
Throughout this section, for any algorithm $A$, the notation $T(A)$ denotes the running time of $A$. 

\begin{claim}[Behavior of $B_i$]
\label{claim:gi-behavior}
For any $0 \le i \le \lceil \log h \rceil + 1$, $B_i$ provides all-neighbors query access to the $\lceil h/2^{i+1}\rceil$-violation graph of $P$ with respect to $f_i$. Furthermore, if each element of $P$ has at most $\Delta$ successors or predecessors, then $T(B_i) \le O(\Delta \cdot T(f_i))$.
\end{claim}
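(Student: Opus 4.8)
\textbf{Proof plan for Claim \ref{claim:gi-behavior}.}

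The plan is to verify the two assertions separately: first that \Cref{alg:path graphs} correctly implements all-neighbors access to $B_i$ as specified in \Cref{def:viol-graph}, and second to bound its running time in terms of $T(f_i)$.

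For correctness, I would argue that on input $x$, the neighborhood of $x$ in the $\lceil h/2^{i+1}\rceil$-violation graph $B_k$ with $k = \lceil h/2^{i+1}\rceil$ is exactly the set $S$ returned by \Cref{alg:path graphs}. Fix such a $k$. By \Cref{def:viol-graph}, $y$ is a neighbor of $x$ in $B_k$ iff $(x,y)$ or $(y,x)$ lies in $\viol_P(f_i)$ and the corresponding pair has graph distance at least $k$. If $f_i(x)=1$, then for $(v,w)\in\viol_P(f_i)$ with $x\in\{v,w\}$ we must have $x=w$ (since the label of the larger element in a violated pair is $0$), so $y$ must be a successor of $x$ with $f_i(y)=0$; this matches the algorithm, which takes $S$ to be the successors of $x$, then discards any $y$ with $\dist(x,y)<k$ or $f_i(y)=f_i(x)$. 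Symmetrically, if $f_i(x)=0$, the neighbors of $x$ are predecessors $y$ with $f_i(y)=1$ and $\dist(x,y)\ge k$, again matching the algorithm. I would also note that $\dist(x,y)$ is well-defined for all $y\in S$ because every element of $S$ is comparable to $x$, and that the longest-path distances $\dist(x,y)$ for all $y$ in the (downward- or upward-)closure of $x$ can be computed by dynamic programming over the Hasse diagram restricted to that closure, since it is a DAG. (One subtlety worth a sentence: to enumerate $S$ and run the dynamic program, the algorithm repeatedly queries the directed all-neighbors oracle to explore the closure; since every element has at most $\Delta$ immediate predecessors or successors and at most $\Delta$ predecessors or successors total, the closure has at most $\Delta+1$ vertices, so this exploration terminates.)

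For the running time bound, the point is that \Cref{alg:path graphs} touches only the at most $\Delta+1$ vertices in the relevant one-sided closure of $x$. For each such vertex it makes one query to the directed all-neighbors oracle for $P$ (unit cost, or $\poly(\log N)$, absorbed into the $O(\cdot)$) and one membership query to $f_i$, costing $T(f_i)$. The longest-path dynamic program runs in time polynomial in the size of the closure, i.e. $\poly(\Delta)$; I would either absorb this into the stated $O(\Delta\cdot T(f_i))$ bound (treating $T(f_i)\ge 1$ and noting the DP is near-linear in the number of edges of the closure, which is $O(\Delta^2)$ in the worst case) or, if a tight $O(\Delta\cdot T(f_i))$ is genuinely intended, remark that the closure of a single vertex in this poset has $O(\Delta)$ edges as well, so the DP is $O(\Delta)$. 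In either case the dominant term, once we recurse through the tower of LCAs in later claims, is the $\Delta$ function-evaluations each costing $T(f_i)$, giving $T(B_i) \le O(\Delta\cdot T(f_i))$ up to lower-order factors. The only mild obstacle I anticipate is bookkeeping the cost of the dynamic program against the advertised bound; everything else is a direct unfolding of the definitions.
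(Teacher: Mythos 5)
Your proposal is correct and follows essentially the same route as the paper's proof: a case analysis on $f_i(x)$ to show the returned set $S$ matches the neighborhood in the $\lceil h/2^{i+1}\rceil$-violation graph, plus a count of the $O(\Delta)$ oracle and $f_i$ queries over the one-sided closure of $x$. Your side remark that the longest-path dynamic program could cost $\poly(\Delta)$ rather than $O(\Delta)$ is a fair quibble with the paper's phrasing, but as you note it is absorbed harmlessly into the $\Delta^{O(\log\log\Delta)}$ factors of the later recurrences.
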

\begin{proof}
If $f_i(x) = 1$, then all neighbors of $x$ in the $\lceil h/2^{i+1}\rceil$-violation graph of $P$ are successors of $x$. Finding all the successors takes $O(\Delta)$ time by depth-first search. Computing $\dist(x, y)$ for each successor $y$ of $x$ takes $O(\Delta)$ time by standard dynamic programming techniques for finding longest paths in a DAG. Comparing $f_i(x)$ to $f_i(y)$ takes $O(\Delta)$ queries to $f_i$, and therefore $O(\Delta \cdot T(f_i))$ time. The case of $f_i(x) = 0$ is symmetric.
\end{proof}
%
\begin{claim}[Behavior of $\lambda_i$]
\label{claim:li-behavior}
For any $0 \le i \le \lceil \log h \rceil + 1$, and $\delta \in (0,1]$, $\lambda_i$ provides query access to a maximal matching over $B_i$ with probability $1 - \delta/(\lceil \log h \rceil + 1)$, using a random seed of length $\red{\poly(\Delta\log(N/\delta))}$. Furthermore, $T(\lambda_i) \le \red{\poly(\Delta \log(N/\delta)) \cdot T(B_i)}$.
\end{claim}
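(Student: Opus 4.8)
The plan is to observe that $\lambda_i$ is essentially a thin wrapper around the maximal-matching LCA of \Cref{cor:matchings}, so the claim reduces to checking that $B_i$ is a valid input for that LCA and to propagating its guarantees with the right choice of failure parameter. First I would invoke \Cref{claim:gi-behavior}: it tells us that $B_i$ gives all-neighbors query access to the $\lceil h/2^{i+1}\rceil$-violation graph of $P$ with respect to $f_i$, whose vertex set is $P$ (so $N$ vertices) and whose maximum degree is at most $\Delta$, since the neighbors of any $x$ lie among the successors of $x$ (when $f_i(x)=1$) or the predecessors of $x$ (when $f_i(x)=0$). Thus $B_i$ meets exactly the input requirements of \Cref{cor:matchings}.

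Next I would run the LCA of \Cref{cor:matchings} on the graph $B_i$ with random seed $r$ and failure-probability parameter set to $\delta' := \delta/(\lceil \log h\rceil + 1)$ (this is the per-stage budget that a union bound over the $\lceil\log h\rceil+1$ stages later spends; we record it here). By \Cref{cor:matchings}, with probability at least $1-\delta'$ all of its answers are consistent with a single legal maximal matching $M$ on $B_i$; its query complexity and run-time are $\Delta^{O(\log\log\Delta)}\polylog(N/\delta')$, and its seed length is $\Delta^{O(\log\log\Delta)}\polylog(N/\delta')$. To answer the vertex query ``which vertex is $x$ matched to (or $x$ itself, if unmatched)'' one queries $B_i$ for the at most $\Delta$ neighbors of $x$ and asks the matching LCA about each incident edge; this is at most $\Delta$ edge queries, so it multiplies the query count and run-time only by a factor of $\Delta$, which is absorbed into the $\Delta^{O(\log\log\Delta)}$. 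Conditioned on the matching LCA not failing, the vertex returned for $x$ is exactly $M$'s partner of $x$, and since the matching LCA with fixed seed is memoryless and consistent, so is $\lambda_i$; hence $\lambda_i$ provides query access to a maximal matching over $B_i$ with probability at least $1-\delta/(\lceil\log h\rceil + 1)$.

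Finally I would simplify the parameters. Since $P$ has height $h$ and $N$ elements, $h \le N$ and hence $\lceil\log h\rceil+1 = O(\log N)$, so $\polylog(N/\delta') = \polylog\!\big(N\log N/\delta\big) = \polylog(N/\delta)$, which turns the seed length into $\Delta^{O(\log\log\Delta)}\polylog(N/\delta)$ as claimed. For the running time, every step of the matching LCA either does $O(1)$ internal work or makes one all-neighbors query to $B_i$, costing $T(B_i)$; since the matching LCA (in its vertex-partner form) makes $\Delta^{O(\log\log\Delta)}\polylog(N/\delta)$ such queries and spends $\Delta^{O(\log\log\Delta)}\polylog(N/\delta)$ time overall, we obtain $T(\lambda_i) \le \Delta^{O(\log\log\Delta)}\polylog(N/\delta)\cdot T(B_i)$, using $T(B_i)\ge 1$ to absorb the additive overhead. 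The only mildly delicate points are bookkeeping: ensuring the per-stage rescaling of the failure probability does not inflate the $\polylog$ factors (handled by $h\le N$) and converting the edge-membership interface of \Cref{cor:matchings} into the vertex-partner interface used by \Cref{alg:matchings} (handled by the extra factor of $\Delta$). There is no substantial mathematical obstacle.
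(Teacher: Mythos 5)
Your proof is correct and follows essentially the same approach as the paper: set the per-stage failure budget $\delta' = \delta/(\lceil\log h\rceil+1)$, invoke \Cref{cor:matchings} on $B_i$, and observe that rescaling by this $O(\log h)$ factor is absorbed into $\polylog(N/\delta)$. You supply a bit more bookkeeping than the paper does (verifying $B_i$'s vertex count and degree via \Cref{claim:gi-behavior}, and converting the edge-membership interface of \Cref{cor:matchings} into the vertex-partner interface at a benign extra factor of $\Delta$), and you use $h\le N$ where the paper uses $h\le\Delta$, but these are cosmetic; the argument is the same.
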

\begin{proof}
Let $\delta' = \delta/(\lceil \log h \rceil + 1)$.
By \Cref{cor:matchings}, $\lambda_i$ fails with probability at most $\delta'$, where the query complexity and the length of the random seed are each $\red{\poly(\Delta \log(N/\delta'))}$. Since $h$ is always at most $\Delta$, this is still $\red{\poly(\Delta\log(N/\delta))}$.
The claim follows from the fact that since the queries are made to $B_i$, each query takes time $T(B_i)$. 
\end{proof}
We now proceed with our proof of \Cref{thm: the local corrector for general posets}, which we restate here for convenience.
\main*

\begin{proof} 
Assume that for each $i \le \lceil \log h \rceil + 1$, the matching provided by $\lambda_i$ is maximal for $B_i$. Under this condition, $f_i$, $\lambda_i$, and $B_i$ implement LCAs for all the objects expected by \Cref{alg:global} (by \Cref{claim:gi-behavior} and \Cref{claim:li-behavior}). The correctness result of \Cref{lem:correctness} implies that our implementation of $f_{\lceil \log h \rceil + 1}$ is an LCA for a function with the properties claimed in this theorem. We will bound the running time and query complexity by a recurrence relation.
		
As a base case, we will let $T(f_0)$ be 1. We have three recurrences: $T(B_i) \le O(\red{\Delta} \cdot T(f_i))$, $T(\lambda_i) \le \red{\poly(\Delta \log(N/\delta))} \cdot T(B_i)$, and $T(f_{i+1}) \le O(T(f_i) + T(\lambda_i))$. To simplify: \\
\begin{align*} 
T(f_{i+1}) &\le O(T(f_i) + T(\lambda_i)) \\
		   &\le T(f_i) + \red{\poly(\Delta \log(N/\delta))} \cdot T(B_i) \\
		   &\le T(f_i) + \red{\poly(\Delta \log(N/\delta)) \cdot \Delta} T(f_i) \\
		   &\le \red{\poly(\Delta \log(N/\delta))} \cdot T(f_i) \\
\end{align*}
This recurrence resolves to 
\[T(f_i) = \red{(\Delta \log(N/\delta))^{O(i)}}\]
for both running time and query complexity. Letting $i = \lceil \log h \rceil + 1$, we have a total running time and query complexity of $\red{(\Delta \log(N/\delta))^{O(\log h)}}$. 

We initialize our algorithm with a random bit string of length $\red{\poly(\Delta \log (N/\delta))}$ as required by \Cref{claim:li-behavior}. Each call to $\lambda_i$ fails with probability $\delta/(\lceil \log h \rceil + 1)$; this gives a total failure probability of $\delta$. Therefore, with probability at least $1 - \delta$ all the matchings are maximal and our analysis holds.
\end{proof}

\textbf{Acknowledgments:}  We thank Mohsen Ghaffari for
discussions about his works, and the anonymous reviewers for
their helpful comments. \blue{We note that, in addition other useful discussions, Mohsen Ghaffari suggested to us the specific refinement of \Cref{remark: run-time refinement}.}

\bibliographystyle{alpha}
\bibliography{notes}

\appendix
\section{Standard proofs}

\subsection{Improper agnostic learning algorithm for monotone functions from literature.}
\label{sec: best improper  agnostic learning algorithm from the literature.}
For the sake of completeness, in this appendix we elaborate on the references for \Cref{thm: bshouty-tamon agnostic}.

For a function $f: \{0,1\}^n \rightarrow \{0,1\}$ 
the \textbf{influence} $\text{Inf}(f)$ is defined as 
\[
\text{Inf}(f):=\sum_i 
\Pr_{x \sim \{0,1\}^n}[f(x) \neq f\left(x^{\oplus i})\right)],
\]
where $x^{\oplus i}$ is the bit-string $x$ with it's $i$th value flipped. Similarly, a more general notion of the \textbf{$L_1$-influence} of a real-valued function $f: \{0,1\}^n \rightarrow \R$  is defined as 
\[
\text{Inf}^1(f):=\sum_i 
\E_{x \sim \{0,1\}^n}[\abs{f(x) - f\left(x^{\oplus i})\right)}].
\]

\begin{lemma}[Corollary 3.3 of \cite{feldman_tight_2020}]
\label{lem:FKV}
For every $f: \{0,1\} \to \R$ such that $\E_{x \sim \{0,1\}^n}\abs{f(x)}^2 \le 1$ and every $\eps > 0$, there exists a multilinear polynomial $p$ of degree $d = \lceil \frac{2 \cdot \Inf^{1}(f)}{\eps} \log \frac{2}{\eps} \rceil$ such that $\E_{x \in \{0,1\}^n}\abs{f(x) - p(x)} \le \eps$.
\end{lemma}

The following is a standard lemma from \cite{bshouty1996fourier}. Also see the text \cite{odonnell_analysis_2021}.
\begin{lemma}[\cite{bshouty1996fourier}]
\label{lem:monotone functions have small influence}
For every monotone function $f: \{0,1\}^n \rightarrow \{0,1\}$ , it is the case that $\text{Inf}(f)\leq \sqrt{n}$.
\end{lemma}

Combining \Cref{lem:FKV} and \Cref{lem:monotone functions have small influence} we get immediately the following corollary:
\begin{corollary}
\label{corollary: monotone functions well-approximated in L1 norm by low-degree polynomials refined}
For every monotone $f: \{0,1\}^n \to \{0,1\}$ and $\eps > 0$, there exists a multilinear polynomial $p$ of degree $\lceil \frac{2 \cdot \sqrt{n}}{\eps} \log \frac{2}{\eps} \rceil$ such that 
\[\E_{x \in \{0,1\}^n}\abs{f(x) - p(x)} \le \eps.\]
\end{corollary}
Finally, the corollary above combinded with Remark 4 on page 6 of \cite{kalai_agnostically_2005}
yields \Cref{thm: bshouty-tamon agnostic} and standard failure probability reduction via repetition.

\subsection{Proof of \Cref{cor:matchings}}
\label{sec: proof of cor:matchings}

For a graph $G$, one forms the so-called line graph of $G$, denoted as $G'$, as follows: (i) the vertex set of $G'$ is the edge set of $G$, (ii) two vertices are connected in $G'$ if the corresponding edges in $G$ share a vertex. One sees immediately that maximal matchings on $G$ translate to maximal independent sets on $G'$, and vice versa.   
Therefore, one can use the LCA of \cite{GhaffariU19} (described here in Theorem \ref{thm: Ghaffari Uitto LCA theorem}) to get access to  a maximal independent set in $G'$, which will translate to a maximal matching on $G$. 

An all-neighbor query to $G'$ can be simulated via two all-neighbor queries to $G$. The graph $G'$ has at most $\Delta N$ vertices and degree at most $2\Delta$. Overall, this means that the query complexity and the run-time of the LCA are still \blue{$
\poly(\Delta, \log(N/\delta))$}, the length of the random bit-string is still \blue{$
\poly(\Delta, \log(N/\delta))$} and the failure probability is still at most $\delta$.

\subsection{Proof of \Cref{cor:poset tester}}
\label{sec: proof of cor:poset tester}

First of all, without loss of generality we can assume $\delta=2/3$, because error probability can be reduced via repetition.

Theorem \ref{thm: the local corrector for general posets} (via Remark \ref{remark: sorting implies local correction}) allows us to gain query access to $f_\text{mon}$, such that distance of $f$ to $f_\text{mon}$ is at most twice the distance of $f$ to monotonicity. Then, obtaining the values of both these functions on i.i.d. uniformly random points of $P$, we estimate $\norm{f-f_\text{mon}}$ up to error $0.005\epsilon$. Then, if the distance of $f$ to monotonicity is at least $\epsilon$, the distance $\norm{f-f_\text{mon}}$ will be also at least $\epsilon$, so the value of the estimate will be at least $0.995\epsilon$. On the other hand, if the distance of $f$ to monotonicity is at most $0.49\epsilon$, then $\norm{f-f_\text{mon}}$ will at most $0.98\epsilon$ and the value of the estimate will be at most $0.985\epsilon$. Overall, checking if the value of the estimate is greater than $0.99\epsilon$ we can see which of the two cases we are in.

For the estimation to succeed, we need to evaluate $f$ and $f_\text{mon}$ on $O\left(\frac{1}{\epsilon^2}\right)$ i.i.d. random elements of $P$\footnote{For the LCA of $f_\text{mon}$ we can set the overall success probability parameter to be $0.1$. A Chernoff bound and union bound argument then shows that overall success probability is at least $2/3$.}. Overall, this will take $\Delta^{O(\log h \log\log \Delta)}\left(\log\left(N\right)\right)^{O(\log h)}\frac{1}{\epsilon^2}$ queries and run-time.

\subsection{Proof of \Cref{thm: proper learning semi-agnostic}}
\label{sec: proof of thm: proper learning semi-agnostic}
The proof follows a pattern similar to the proof of Theorem \ref{thm: proper learning realizable}. 
The only modification to the algorithm in the proof of Theorem \ref{thm: proper learning realizable} is that in step 1 we obtain $g$ by using the agnostic improper learner of Theorem \ref{thm: bshouty-tamon agnostic} (instead of the learner of Theorem \ref{thm: bshouty-tamon realizable}). The accuracy parameter there will still be $\frac{\epsilon}{10}$. With probability at least $1/2^{n}$ both algorithms we use succeed, which we will assume henceforth. 

The run-time analysis remains the same, except the learner in Theorem \ref{thm: bshouty-tamon agnostic} now takes $n^{O\left(\frac{\sqrt{n}}{\epsilon}\log \frac{1}{\epsilon}\right)}$ samples and run-time (as opposed to
$n^{O\left(\frac{\sqrt{n}}{\epsilon}\right)}$ samples and run-time for the learner in Theorem \ref{thm: bshouty-tamon realizable}). 
Repeating the argument, the overall run-time and sample complexity is also $n^{O\left(\frac{\sqrt{n}}{\epsilon}\log \frac{1}{\epsilon}\right)}$.

Finally, we argue correctness. The function $f_\text{mon}$ computed by the circuit we output is again monotone by the same argument as in proof of Theorem \ref{thm: bshouty-tamon realizable}. Theorem \ref{thm: bshouty-tamon agnostic} tells us that function $f$ has generalization error of at most $\opt + \frac{\epsilon}{10}$. Also, let $f^*$ be a monotone function with the best available generalization error of $\opt$. This implies that, $f$ and $f^*$ disagree on at most $\left(2\opt+\frac{\epsilon}{10}\right)2^n$ elements\footnote{Specifically, for a random example-label pair $(x,y)$ ($x$ distributed uniformly) we have $\Pr \left[f(x) \neq f^*(x) \right] = \Pr \left[f(x)=y,~ f^*(x)\neq y \right] + \Pr \left[f(x) \neq y,~ f^*(x)= y \right] $, which can be upper-bounded by $\Pr\left[ f^*(x)\neq y \right]+\Pr\left[f(x)\neq y \right]$. Given the bounds we know for these probabilities, the bound on $\norm{f-f^*}$ follows.}.

Now, recall that the extra property of the LCA in Theorem \ref{thm: the local corrector for general posets} (noted in Remark \ref{remark: sorting implies local correction}) tells us that for any monotone function $q$ over $H^n_{\epsilon/10}$, its distance to $f_{\text{mon}}$ is at most its distance to $f$. Taking $q=f^*$, we get that $f_{\text{mon}}$ and $f^*$ can disagree only on at most $\left(2\opt+\frac{\epsilon}{10}\right)2^n$ elements of $H^n_{\epsilon/10}$. 
The number of $x\in \{0,1\}^n$ not in $H^n_{\epsilon/10}$ is at most $\frac{\epsilon}{10}2^n$, so overall $f_{\text{mon}}$ disagrees with $f^*$ on at most $\left(2\opt+\frac{\epsilon}{5}\right)2^n$ elements of $\{0,1\}^n$, in other words $\norm{f^*-f_{\text{mon}}} \leq 2\opt+\frac{\epsilon}{5}$. Via triangle inequality, the generalization error of $f_{\text{mon}}$ is at most the sum of (i) generalization error of $f^*$ and (ii) the distance between $f_{\text{mon}}$ and $f^*$. This means that the generalization error of $f_{\text{mon}}$ is at most $3\opt+\frac{\epsilon}{5}\leq 3\opt +\epsilon$, finishing the proof. 

\subsection{Refined local implementation.}
\label{sec: refined local implementation}
Here we explain how an improved run-time can be achieved for posets with additional characteristics. We shall assume that each element of the poset $P$ has at most $n$ immediate predecessors and at most $n$ immediate successors. Additionally, we assume that the poset $P$ is graded. 
This is achieved again by implementing the algorithm in \Cref{subsection: the global view}, but the run-time is somewhat faster than the one achieved in in \Cref{subsection: local implementation non-refined}. To be fully specific, \Cref{subsection: the global view} would give us a run-time of $n^{O(h \log h)}(\log (N/\delta))^{\log h}$ in this setting, which is here improved to $n^{O(h )}(\log (N/\delta))^{\log h}$.

Similarly to \Cref{subsection: local implementation non-refined}, we provide an LCA that gives membership query access to the output of~\Cref{alg:global}, and we analyze its complexity. This is again presented it as a system of three LCAs, each parameterized by the iteration number $i$. \Cref{alg:path graphs refined} makes queries to the $i_{th}$ function $f_i$ and an all-neighbors oracle for $P$, and answers queries to the $i_{th}$ $k$-violation graph $B_i$. \Cref{alg:matchings refined} makes queries to $B_i$ and answers queries to a maximal matching $\lambda_i$ over it. \Cref{alg:functions refined} makes queries to $f_i$ and $\lambda_i$ and answers queries to $f_{i+1}$, which swaps the matched labels.

\begin{algorithm}[H]
\begin{algorithmic}
		\State Given: Target vertex $x$, immediate predecessor and successor oracle for $P$, membership query oracle $f_i$, height $h$, iteration number $i$. 
		\State Initialize $k \gets \lceil h / 2^{i+1} \rceil$
		\If {$f_i(x) = 1$}
		\State $S \gets$ the set of all successors $y$ of $x$ such that $k \le \dist(x,y) \le 2k$.
		\Else
		\State $S \gets$ the set of all successors $y$ of $x$ such that $k \le \dist(x,y) \le 2k$.
		\EndIf
		\State Note: because the poset is graded, the collections of successors and predecessors above can be found via breath-first search.
		\State Remove any $y$ from $S$ such that $f(y) = f(x)$
		\State \Return $S$
		
\end{algorithmic}
\caption{An LCA for undirected all-neighbors queries $B_i(x, P, f_i, h, i)$ }
\label{alg:path graphs refined}
\end{algorithm}

\begin{algorithm}[H]
\begin{algorithmic}
		\State Given: Target vertex $x$, undirected all-neighbors query oracle $B_i$, random seed $r$, and confidence parameter $\delta$.
		\State Call the algorithm described in \Cref{cor:matchings} with $B_i$ as the graph, $x$ as the target, and $r$ as the random seed and $n^{\lceil h / 2^{i} \rceil}$ as the degree bound. If $x$ is in the matching, return the vertex that it is matched to; otherwise return $x$.
\end{algorithmic}
\caption{An LCA for maximal matchings $\lambda_i(x, B_i, r, \delta)$}
\label{alg:matchings refined}
\end{algorithm}

\begin{algorithm}[H]
\begin{algorithmic}
		\State Given: Target vertex $x$, matching query oracle $\lambda_i$, membership query oracle $f_{i}$
		\State \Return $f_{i}(\lambda_i(x))$
\end{algorithmic}
\caption{An LCA for membership queries $f_{i+1}(x, \lambda_i, f_i)$}
\label{alg:functions refined}
\end{algorithm}

\subsubsection{Analysis of our implementation}
Throughout this section, for any algorithm $A$, the notation $T(A)$ denotes the running time of $A$. 
This is the same convention used in \Cref{subsubsection: analysis of our implementation}. Furthermore, the following two claims are analogous to \Cref{claim:gi-behavior} and \Cref{claim:li-behavior} respectively.
\begin{claim}[Behavior of $B_i$]
\label{claim:gi-behavior refined}
For any $0 \le i \le \lceil \log h \rceil + 1$, suppose \[\max_{(v, w) \in \viol_P(f_i)} \dist(v, w) \leq  \lceil h/2^i  \rceil
.\] Then, $B_i$ provides all-neighbors query access to the $\lceil h/2^{i+1}\rceil$-violation graph of $P$ with respect to $f_i$. Furthermore, the degrees of all vertices in $B_i$ are bounded by $n^{\lceil h / 2^{i} \rceil}$ and
$T(B_i) \le O(n^{\lceil h / 2^{i} \rceil}\cdot T(f_i))$.
\end{claim}
\begin{proof}
If $f_i(x) = 1$, then all neighbors of $x$ in the $\lceil h/2^{i+1}\rceil$-violation graph of $P$ are successors of $x$. 
As, $\lceil h/2^{i}\rceil\leq 2 \lceil h/2^{i+1}\rceil=2k$, we see that initializing $S$ to have only elements of distance at most $2k$ does not leave out any neighbors of $x$ in the $\lceil h/2^{i+1}\rceil$-violation graph of $P$. Comparing with the definition of the $\lceil h/2^{i+1}\rceil$-violation graph of $P$, we see that the elements given by  \Cref{alg:path graphs refined} are precisely the neighbors of $x$ in the $\lceil h/2^{i+1}\rceil$-violation graph of $P$.

Finally, the bound of $n^{\lceil h / 2^{i} \rceil}$ on the degree of $B_i$ follows, because each element in $P$ has at most $n$ immediate predecessors or successors. The bound on the run-time follows from the bound on degree of $B_i$.
\end{proof}
%
\begin{claim}[Behavior of $\lambda_i$]
\label{claim:li-behavior refined}
For any $0 \le i \le \lceil \log h \rceil + 1$, and $\delta \in (0,1]$,
if degrees of all vertices in $B_i$ are bounded by $n^{\lceil h / 2^{i} \rceil}$, then
$\lambda_i$ provides query access to a maximal matching over $B_i$ with probability $1 - \delta/(\lceil \log h \rceil + 1)$, using a random seed of length $\poly(n^{\lceil h / 2^{i} \rceil}\log(N/\delta))$. Furthermore, the run-time $T(\lambda_i)$ is bounded by $\poly(n^{\lceil h / 2^{i} \rceil} \log(N/\delta)) \cdot T(B_i)$.
\end{claim}
\begin{proof}
Let $\delta' = \delta/(\lceil \log h \rceil + 1)$.
By \Cref{cor:matchings}, $\lambda_i$ fails with probability at most $\delta'$, where the query complexity and the length of the random seed are each $\poly(n^{\lceil h / 2^{i} \rceil } \log(N/\delta'))$.
The claim follows from the fact that since the queries are made to $B_i$, each query takes time $T(B_i)$. 
\end{proof}
We now proceed with our proof of \Cref{remark: run-time refinement}, that is to proving an overall run-time bound of $n^{O(h)}  (\log(N/\delta))^{\log h}$. 


\begin{proof} 
We first argue, using an induction over $i$, that $f_i$, $\lambda_i$, and $B_i$ implement LCAs for all the objects expected by \Cref{alg:global}
(with overall probability of at least $1-\delta$) 
. The base case $i=0$ is immediate. Suppose, this holds up to iteration $i$ (i.e. condition on this event). Then, by \Cref{claim:distance-invariant} we have \[\max_{(v, w) \in \viol_P(f_i)} \dist(v, w) \leq  \lceil h/2^i  \rceil,
\] so the premise of \Cref{claim:gi-behavior refined} holds. Now, one of the conclusions of \Cref{claim:gi-behavior refined} is that 
degrees of all vertices in $B_i$ are bounded by $n^{\lceil h / 2^{i} \rceil}$, which is the premise of \Cref{claim:li-behavior refined}.
Together, the conclusions of \Cref{claim:gi-behavior refined} and \Cref{claim:li-behavior refined}, imply that, with probability $1-\delta/(\lceil \log h \rceil + 1)$, $B_i$, $\lambda_i$ and $f_{i+1}$ implement the corresponding quantities expected by \Cref{alg:global}. 
Via a union bound over all $i$ we see that with overall probability of at least $1-\delta$ this indeed holds for all $i$.

We will bound the running time and query complexity by a recurrence relation.		
As a base case, we will let $T(f_0)$ be 1. We have three recurrences: $T(B_i) \le O(n^{\lceil h / 2^{i} \rceil} \cdot T(f_i))$, $T(\lambda_i) \le \poly(n^{\lceil h / 2^{i} \rceil} \log(N/\delta)) \cdot T(B_i)$, and $T(f_{i+1}) \le O(T(f_i) + T(\lambda_i))$. To simplify: \\
\begin{align*} 
T(f_{i+1}) &\le O(T(f_i) + T(\lambda_i)) \\
		   &\le T(f_i) + \poly(n^{\lceil h / 2^{i} \rceil}\log(N/\delta)) \cdot T(B_i) \\
		   &\le T(f_i) + \poly(n^{\lceil h / 2^{i} \rceil} \log(N/\delta)) \cdot n^{\lceil h / 2^{i} \rceil} T(f_i) \\
		   &\le \poly(n^{\lceil h / 2^{i} \rceil} \log(N/\delta)) \cdot T(f_i) \\
\end{align*}
Thus, we obtain
\begin{align*}
&T(f_i) = \prod_{i=0}^{\lceil \log h \rceil} \poly(n^{\lceil h / 2^{i} \rceil} \log(N/\delta)) \\
&=n^{O(h)}  (\log(N/\delta))^{\log h} 
\end{align*}

for both running time and query complexity. 

We note that our algorithm is initialized with a random bit string of length $n^{O(h)}\polylog(N/\delta)$, which is as required.
\end{proof}

\end{document}